\newtheorem{thm}{Theorem}[section]
\newtheorem{lem}{Lemma} [section]
\newtheorem{definition}{Definition}
\newcommand{\A}{{\bf A}}
\def\x{{\bf x}}
\def\y{{\bf y}}
\def\w{{\bf w}}
\def\z{{\bf z}}
\def\e{{\bf e}}
\newcommand{\beq}{\begin{equation}}
\newcommand{\eeq}{\end{equation}}
\newcommand{\bea}{\begin{eqnarray}}
\newcommand{\eea}{\end{eqnarray}}
\newcommand{\stexp}{\mbox{$\mathbb{E}$}}    
\newcommand{\Prob}{\ensuremath{\mathbb{P}}}
\long\def\symbolfootnote[#1]#2{\begingroup%
\def\thefootnote{\fnsymbol{footnote}}\footnote[#1]{#2}\endgroup}
\begin{document}

\title{Improving the Thresholds of Sparse Recovery: An Analysis of a Two-Step Reweighted Basis Pursuit Algorithm}
\author{ M. Amin Khajehnejad$^\dag$, Weiyu Xu$^*$, A. Salman Avestimehr$^*$ and Babak Hassibi$^\dag$ \\\text{}\\$^\dag$California Institute of Technology, Pasadena CA 91125 \\ $^*$Cornell University, Ithaca NY 14853
\thanks{The results of this paper were presented in part at the International Symposium on Information Theory, ISIT 2010}
\thanks{This work was supported in part by the National Science Foundation under grants CCF-0729203, CNS-0932428 and CCF-1018927, by the Office of Naval Research under the MURI grant N00014-08-1-0747, and by Caltech's Lee Center for Advanced Networking.}
}
 \maketitle
 \vspace{1cm}

\begin{abstract}
\boldmath
It is well known that $\ell_1$ minimization can be used to recover sufficiently sparse unknown signals from compressed linear measurements. In fact, exact thresholds on the sparsity, as a function of the ratio between
the system dimensions,
so that with high probability almost all sparse signals can be
recovered from i.i.d. Gaussian measurements, have been computed and are
referred to as ``weak thresholds'' \cite{D}. In this paper, we
introduce a reweighted $\ell_1$ recovery algorithm composed of two
steps: a standard $\ell_1$ minimization step to identify a set of
entries where the
signal is likely to reside, and a weighted $\ell_1$ minimization step where entries outside this set are penalized. For signals where the non-sparse component entries are independent and identically drawn from certain classes of distributions, (including most well known continuous distributions), we prove a
\emph{strict} improvement in the weak recovery threshold. Our analysis suggests that the level of improvement in the weak threshold depends on the behavior of the distribution at the origin. Numerical simulations
verify the distribution dependence of the threshold improvement very well, and suggest that in the case of i.i.d. Gaussian nonzero entries, the improvement can be quite impressive---over 20\%  in
the example we consider.
\end{abstract}


%

\section{Introduction}
\label{sec:Intro}
Compressed sensing addresses the problem of recovering sparse signals
from under-determined systems of linear equations  \cite{Rice}. In
particular, if $\x$ is an $n\times1$ real vector which is known to have at
most $k$ nonzero elements where $k<n$, and $\A$ is an $m\times n $
measurement matrix with $k<m<n$, then for appropriate values of $k$,
$m$ and $n$, it is possible to efficiently recover $\x$ from the set of linear projections $\y=\A\x$ \cite{D
  CS,DT,CT,Baraniuk Manifolds}. The most well recognized such
algorithm is $\ell_1$ minimization which can be formulated  as
follows:
\beq
\label{eq:l1 min}
\min_{\A\z=\A\x}\|\z\|_1.
\eeq
The first result that established the fundamental thresholds of signal
recovery using $\ell_1$ minimization is due to Donoho and Tanner
\cite{D,DT}, where it is shown
that if the measurement matrix is i.i.d. Gaussian, for a given ratio
of $\delta = \frac{m}{n}$, $\ell_1$ minimization can successfully
recover {\em every} $k$-sparse signal, provided that $\mu = \frac{k}{n}$ is
smaller than a certain threshold. This statement is true
asymptotically as $n\rightarrow \infty$ and with high
probability. This threshold guarantees the recovery of {\em all}
sufficiently sparse signals and is therefore referred to as a \emph{strong}
threshold. It therefore does not depend on the actual distribution of
the nonzero entries of the sparse signal and as such is a universal
result. However, at this point, it is not known whether there exist other polynomial-time algorithms with strong thresholds superior to those of $\ell_1$ minimization.

Another notion introduced and computed in \cite{D,DT} is that of a
{\em weak} threshold where signal recovery is guaranteed for {\em
  almost all} support sets and {\em almost all} sign patterns of the
sparse signal, with high probability as $n\rightarrow\infty$. The weak
threshold is the one that can be observed in simulations of $\ell_1$
minimization and allows for signal recovery beyond the strong threshold. The weak threshold of $\ell_1$ minimization is also universal from the vantage point of signal distribution; The amplitudes of the nonzero entries of a sparse signal does not affect its recoverability by solving (\ref{eq:l1 min}). In other words, if a sparse signal with a support set $S$ and a particular sign pattern is recoverable using $\ell_1$ minimization, so is every other signal with the same support and sign pattern. It is worth noting that the weak thresholds of $\ell_1$ minimization can be generalized to a broader class of random measurement matrices, including those with null spaces that are random orthant symmetric and generic subspaces (e.g., matrices with i.i.d. Bernoulli or uniform (-1,1) entries, etc.)~\cite{Donoho_face_counting}. Finally, similar to the strong thresholds, it is not known whether there exist other polynomial-time algorithms with superior weak thresholds than $\ell_1$ minimization.

\noindent \textbf{Our Contributions.} In this paper we prove that a certain \emph{two-step
  reweighted $\ell_1$} algorithm indeed has higher weak recovery
guarantees than ordinary $\ell_1$ minimization for particular classes of sparse signals, including sparse
Gaussian signals.  We had previously introduced this algorithm in
\cite{Khajehnejad_Allerton}, and had proven that for a very
restricted class of \emph{polynomially decaying} sparse signals it
outperforms standard $\ell_1$ minimization. In this paper however, we
  extend this result to a much wider and more reasonable class of
sparse signals. The key to our result is the fact that for these
  classes of signals, $\ell_1$ minimization has an \emph{approximate support
  recovery} property which can be exploited in reweighted
  $\ell_1$ algorithm, to obtain a provably superior weak threshold.
In particular, we consider Gaussian sparse
signals, namely sparse signals in which the nonzero entries are i.i.d.
  Gaussian. Our analysis of Gaussian sparse signals relies on
  concentration bounds on the partial sum of their order
  statistics. Furthermore, we show that
for continuous distributions with sufficiently fast decaying tails and
nonzero value at the origin, similar improvements for the weak threshold can be postulated. More generally, we show that as long as the nonzero entries of the sparse signal are independently drawn from a continuous distribution $f(\cdot)$ that has a nonzero finite order derivative at the origin, the weak recovery threshold of our proposed two step reweighted $\ell_1$ algorithm is strictly larger than that of $\ell_1$ minimization. Although not specifically derived, our analysis suggests that the improvement rate is a function of the smallest integer $r$ for which  $f^{(r)}(0)\neq 0$; The smaller such $r$ is, the larger the improvement is. We perform numerical simulations using various distributions which authenticate this assertion.

It is worth noting that different variations of reweighted $\ell_1$
algorithms have been recently introduced in the literature and, have
shown experimental improvement over ordinary $\ell_1$ minimization
\cite{Needell,CWB07}.  In \cite{Needell} approximately sparse signals
have been considered, where perfect recovery is often not achieved. The question is therefore not that of an explicit recovery threshold extension. Instead, it has been shown that the reconstruction error can be reduced using an iterative scheme. In \cite{CWB07}, a similar algorithm is suggested and is empirically shown to outperform $\ell_1$
minimization for exactly sparse signals with certain continuous
distributions. In particular, it was empirically witnessed that the proposed algorithm does not improve the signal recovery for sparse vectors with constant amplitude nonzero entries (\emph{i.e.} a nonzero entry is either 1 or -1).  Unfortunately, \cite{CWB07} provides no theoretical analysis or performance guarantees for the success or failure of the method. The particular reweighted $\ell_1$ minimization algorithm that we propose and analyze is of significantly
less computational complexity than the earlier ones (it only solves
two linear programs). Furthermore, experimental results confirm that
it exhibits much better performance than previous reweighted methods.
Finally, while we do rigorously establish a strict
{\em improvement} in the weak threshold, we currently do not have
tight bounds on the new weak threshold and simulation results are far
better than the bounds we can provide at this time.

The organization of this paper is as follows. In Section \ref{sec:def}, we introduce the basic definitions used throughout the paper. In Section \ref{sec:model}, the signal model is described, the notions of strong and weak recovery thresholds are quantified and the main problem is stated, namely to find a polynomial time recovery algorithm with better thresholds than $\ell_1$ minimization for sparse signal recovery. In Section \ref{sec:Algorithm} a two step reweighted linear programming algorithm is described and is claimed to be superior in performance to the regular $\ell_1$ minimization algorithm for sparse vectors with Gaussian distributions (Theorem \ref{thm: final thm}). Sections \ref{sec:robustness} and \ref{sec:perfect recovery} are dedicated to the detailed proof of this claim, through  separate analysis of different stages of the algorithm. In Section \ref{sec:generalization}, these results are generalized to a much broader class of sparsity models beyond  Gaussians. The technical discussions of this paper predict that the performance of the proposed algorithm strongly depends on the distribution of the nonzero entries of the random sparse signal model. The paper ends in Section \ref{sec:simulation} with some numerical evaluations of the proposed algorithm and the verification of the distribution dependent behavior of the reweighted algorithm.
\hfill 
\section{Basic Definitions}
\label{sec:def}
A sparse signal with exactly $k$ nonzero entries is called
$k$-sparse. For a vector $\x$, $\|\x\|_1$ denotes the $\ell_1$
norm. The support (set) of $\x$,  denoted by $supp(\x)$, is the index
set of its nonzero coordinates. For a vector $\x$ that is not exactly
$k$-sparse, we define the $k$-support of $\x$ to be the index set of
the largest $k$ entries of $\x$ in amplitude, and denote it by
$supp_k(\x)$. For a subset $K$ of the entries of $\x$, $\x_K$ means
the vector formed by those entries of $\x$ indexed in $K$. Finally,
$\max|\x|$ and $\min|\x|$ mean the absolute value of the maximum and
minimum entry of $\x$ in magnitude, respectively.

\section{Signal Model and Problem Description}
\label{sec:model}
We consider sparse random signals with i.i.d. nonzero
coefficients drawn from a given continuous distribution (in particular Gaussian). In other words we assume that the unknown sparse signal is an
$n\times 1$ vector $\x$ with exactly $k$ nonzero entries, where each
nonzero entry is independently derived from a distribution $f(\cdot)$ (\textit{e.g.},
standard normal distribution $\mathcal{N}(0,1)$). The measurement matrix $\A$ is an $m\times n$
matrix with i.i.d. Gaussian entries with an aspect ratio $\delta
= \frac{m}{n}$. The theory of compressed sensing guarantees that if
$\mu=\frac{k}{n}$ is smaller than a certain threshold, then for almost all measurement matrices $\A$ every
$k$-sparse signal can be recovered using $\ell_1$ minimization. The
relationship between $\delta$ and the maximum threshold of $\mu$ for
which such a guarantee exists is called the \emph{strong sparsity
threshold}, and is denoted by $\mu_{S}(\delta)$.  A more practical
performance guarantee is the so-called \emph{weak sparsity threshold},
denoted by $\mu_{W}(\delta)$, which has the following
interpretation: For a fixed value of $\delta = \frac{m}{n}$ and an
i.i.d. Gaussian matrix $\A$ of size $m\times n$,  a random $k$-sparse
vector $\x$ of size $n\times 1$ with a randomly chosen support set and
a random sign pattern can be recovered from $\A\x$ using $\ell_1$
minimization with high probability, if
$\frac{k}{n}<\mu_{W}(\delta)$. In addition, other forms of recovery thresholds can be defined using different constraints and requirements. For example, when the reconstruction of signals with all support sets and almost all sign patterns is considered, the resulting thresholds are called \emph{sectional}. These thresholds were discussed in \cite{DT} for i.i.d. Gaussian matrices. Furthermore, strong and weak thresholds can also be defined and evaluated for the reconstruction of nonnegative signals (see \emph{e.g.} \cite{Donoho positive}), or for alternative classes of matrix ensembles. For example, strong thresholds for $\ell_1$ minimization over expander-graph-based measurement matrices were derived in \cite{Indyk}, and in \cite{Khajehnejad_Expanders} for nonegative vectors in addition to weak threshold forms.

In this paper, we consider sparse signals that fall outside the recoverability regime of $\ell_1$ minimization. In other words, we assume that the support size of $\x$, namely $k$, is slightly
larger than the weak threshold of $\ell_1$ minimization. In other
words,  $k = (1+\epsilon_0)\mu_{W}(\delta)n$ for some
$\epsilon_0>0$. This means that if we use $\ell_1$ minimization, a
randomly chosen $\mu_{W}(\delta)n$-sparse signal will be recovered
perfectly with very high probability, whereas a randomly selected
$k$-sparse signal will not. We would like to show that for a strictly
positive $\epsilon_0$, the two-step reweighted $\ell_1$ algorithm of
Section \ref{sec:Algorithm} can indeed recover a randomly selected
$k$-sparse signal with high probability, implying that the proposed method has a
superior weak threshold.

\section{Two-Step Weighted $\ell_1$ Algorithm}
\label{sec:Algorithm}
We propose the following method outlined in Algorithm \ref{alg:modmain}, consisting of two linear programming steps: a standard $\ell_1$ minimization and a weighted one. The input to
the algorithm is the vector $\y=\A\x$, where $\x$ is the unknown $k$-sparse
signal with $k=(1+\epsilon_0)\mu_W(\delta)n$, and the output is an
approximation $\x^*$ to the unknown vector $\x$. We assume that the sparsity $k$ (or an upper bound on it) is known. However, the algorithm assumes no knowledge of the distribution of the nonzero entries of the unknown signal. Also $\omega>1$ is a predetermined weight.

\begin{figure}[t]
\centering
  \includegraphics[width= 0.4\textwidth]{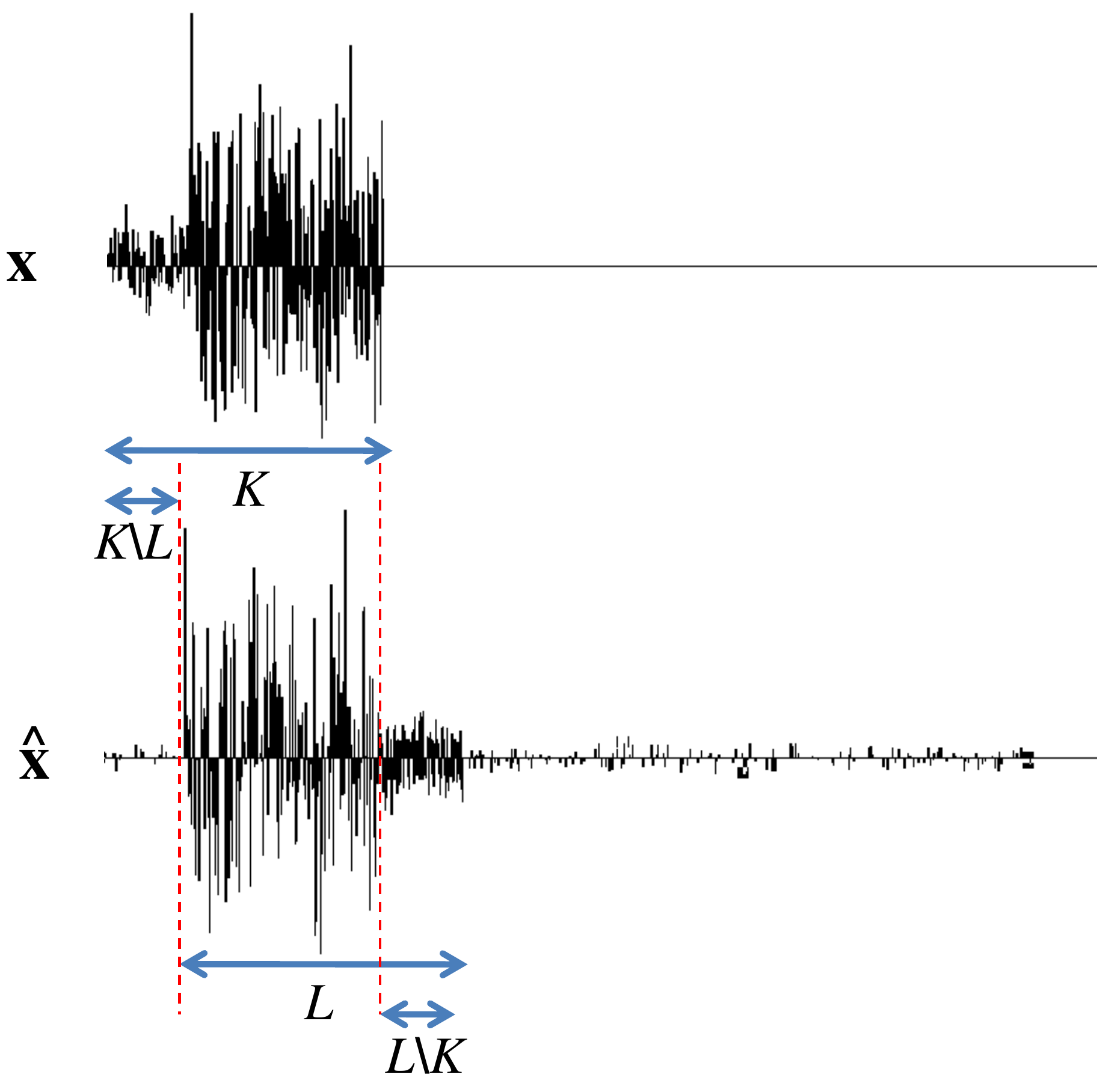}
  \caption{ \scriptsize{A pictorial example of a sparse signal  and its $\ell_1$ minimization approximation.}}
  \label{fig:sigmal}
\end{figure}

\begin{algorithm}[t]\caption{\small{Two Step Reweighted $\ell_1$ minimization.}}
\begin{algorithmic}[1]
\STATE \textbf{Input: } Measurement matrix $\A^{m\times n}$, measurement vector $\y^{m\times 1}$, integer $k<n$, predetermined real valued weight $\omega >1$.\\
\STATE \textbf{Output: } Sparse vector $\x$ with $\A\x = \y$.
\STATE Solve the $\ell_1$ minimization problem:
\begin{equation}
\hat{\x} = \arg{ \min{ \|\z\|_1}}~~\text{subject to}~~ \A\z
= \y.
\end{equation}
\STATE Obtain an approximation for the support set of $\x$:
find the index set $L \subset \{1,2, ..., n\}$ which corresponds to
the largest $k$ elements of
$\hat{\x}$ in magnitude.
\STATE Solve the following weighted $\ell_1$ minimization problem and declare the solution as output:
 \beq
\x^* = \arg{\min\|\z_L\|_1+\omega\|\z_{\overline{L}}\|_1}~~\text{subject to}~~ \A\z
= \y.
\label{eq:weighted l_1}
\eeq
\end{algorithmic}
\label{alg:modmain}
\end{algorithm}

The intuition behind the algorithm is as follows. In the first step, a standard $\ell_1$ minimization is performed. If the sparsity of the signal is beyond the weak threshold $\mu_W(\delta)n$, then $\ell_1$ minimization is most probably not capable of recovering the signal. However, we use
the output of the $\ell_1$ minimization to identify an index set, $L$,
which we ``hope'' contains most of the nonzero entries of $\x$ (see Figure \ref{fig:sigmal}). We
finally perform a weighted $\ell_1$ minimization by penalizing those
entries of $\x$ that are not in $L$ (ostensibly because they have a
lower chance of being nonzero). Consequently, Algorithm \ref{alg:modmain} is capable of recovering less sparse signals, or equivalently has a higher weak threshold than that of $\ell_1$ minimization. This intuition is formalized in the following theorem.

\begin{thm}[Weak threshold of Algorithm \ref{alg:modmain}]
Let $\A$ be an $m\times n$ i.i.d. Gaussian matrix with
$\frac{m}{n}=\delta$. There exist $\epsilon_0>0$ and $\omega>0$ so that Algorithm
\ref{alg:modmain} perfectly recovers a random
$(1+\epsilon_0)\mu_{W}(\delta)n$-sparse vector with i.i.d. Gaussian
entries with high probability as $n$ grows to infinity.
\label{thm: final thm}
\end{thm}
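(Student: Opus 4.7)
The natural approach is to decompose the proof along the two steps of Algorithm \ref{alg:modmain}: first show that the standard $\ell_1$ step performs an \emph{approximate} support recovery when $k=(1+\epsilon_0)\mu_W(\delta)n$ is only slightly beyond the weak threshold, and then show that the weighted step, armed with a good-but-imperfect support estimate $L$, recovers $\x$ exactly for a suitable weight $\omega>1$. The parameters $\epsilon_0$, a ``miss fraction'' $\beta$ governing how much of $supp(\x)$ is absent from $L$, and $\omega$ must be tuned jointly so that the two stages compose.

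\textbf{Stage 1 (approximate support recovery).} The key observation is that although $k/n$ slightly exceeds $\mu_W(\delta)$, if one discards the $\eta k$ smallest entries of $\x_S$ in magnitude, the remaining ``virtual'' signal is $(1-\eta)(1+\epsilon_0)\mu_W(\delta)n$-sparse, which drops strictly below the weak threshold whenever $\eta>\epsilon_0/(1+\epsilon_0)$. For i.i.d.\ $\mathcal{N}(0,1)$ amplitudes, a concentration argument on the partial sum of order statistics of $\{|x_i|\}_{i\in S}$ shows that the $\ell_1$ mass of the discarded entries concentrates near $k\cdot c(\eta)$, with $c(\eta)\to 0$ as $\eta\to 0$ because $f(0)\neq 0$. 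A robustness/stability version of the $\ell_1$ null-space property for $\A$ (to be developed in Section \ref{sec:robustness}) would then bound $\|\hat\x-\x\|_1$ by a constant times this tail mass; a standard top-$k$ counting argument converts this into $|S\setminus L|\leq \beta k$ for some $\beta=\beta(\epsilon_0)$ tending to $0$ with $\epsilon_0$.

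\textbf{Stage 2 (exact recovery by weighted $\ell_1$).} Given the structural guarantee $|S\setminus L|\leq \beta k$, I would characterize success of \eqref{eq:weighted l_1} through its weighted null-space condition: for every nonzero $\w\in\mathrm{null}(\A)$,
\[
\sum_{i\in S\cap L}\mathrm{sgn}(x_i)\,w_i+\omega\sum_{i\in S\cap\overline{L}}\mathrm{sgn}(x_i)\,w_i \;<\; \sum_{i\in\overline{S}\cap L}|w_i|+\omega\sum_{i\in\overline{S}\cap\overline{L}}|w_i|.
\]
For i.i.d.\ Gaussian $\A$, the probability that a uniformly random sign pattern violates this condition for some $\w$ can be controlled by a Grassmann-angle / Gordon escape-through-a-mesh computation analogous to Donoho--Tanner, but with the descent cone tilted by the weights. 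The small exceptional set $S\cap\overline{L}$ is penalized by the factor $\omega$; provided $\beta$ is sufficiently small relative to $\omega$ and to the weak-threshold slack at the effective sparsity $(1-\beta)(1+\epsilon_0)\mu_W(\delta)$, the tilted cone sits inside the recovery regime, yielding exponentially small failure probability.

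\textbf{Main obstacle.} The delicate point is coupling the two stages: one must select $\epsilon_0>0$, $\eta$, $\beta$, and $\omega>1$ so that (i) the Gaussian order-statistics tail $c(\eta)$ together with the robustness constant from Stage 1 forces $\beta$ below a prescribed value, and (ii) that value is still small enough for the weighted Grassmann-angle analysis of Stage 2 to certify exact recovery. Establishing that such a consistent choice exists---equivalently, that the quantitative gain from raising the weight on $\overline{L}$ strictly exceeds the loss from the $\beta k$ mistakes in the support estimate---is the real technical content. The fact that the Gaussian density satisfies $f(0)\neq 0$ is exactly what makes the tail mass $c(\eta)$ vanish linearly as $\eta\to 0$, which is how the signal distribution enters the argument and why a strictly positive $\epsilon_0$ can be extracted at all.
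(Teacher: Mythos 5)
Your plan follows the paper's proof essentially step for step: Stage 1 is the approximate support recovery of Section \ref{sec:robustness} (backing off to a $k_1$-support $K_1$ strictly below the weak threshold, Gaussian order-statistics concentration for the discarded tail mass, and the weak-robustness bound on $\|\x-\hat{\x}\|_1$ converted into an overlap bound on $|supp(\x)\cap L|$), and Stage 2 is the weighted-$\ell_1$ Grassmann-angle threshold $\lambda_c$ of Theorem \ref{thm:delta} together with the continuity argument that jointly tunes $\epsilon_0$, the overlap fractions, and $\omega$. The one ingredient you gloss over is that the robustness factor is not a fixed constant but degrades as the back-off shrinks (the prefactor $\frac{2C(\epsilon_1)(1+\kappa^*)}{C(\epsilon_1)-1}$ diverges as $\epsilon_1\to 0$), which is why the paper needs the explicit scaling law $C(\epsilon_1)\geq 1/\sqrt{1-\epsilon_1}$ of Theorem \ref{thm:scale} to check that the product of this diverging factor with the vanishing tail mass still tends to zero.
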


The interpretation of the above theorem is that for sparse signals whose nonzero entries follow a Gaussian distribution, Algorithm
\ref{alg:modmain} has a recovery threshold beyond that of standard
$\ell_1$ minimization. The proof is provided in the next sections as follows. In Section
\ref{sec:robustness}, we prove that there is a large overlap between
the index set $L$, found in step 2 of the algorithm, and the support set
of the unknown signal $\x$ (denoted by $K$)---see Theorem \ref{thm:l_1
  support recovery} and Figure \ref{fig:sigmal}. Then in Section
\ref{sec:perfect recovery}, we show that the large overlap between $K$
and $L$ can result in perfect recovery of $\x$, beyond the standard
weak threshold, when a weighted $\ell_1$ minimization is used in step
3. The formal proof of Theorem \ref{thm: final thm} appears in Section \ref{sec:perfect recovery}.

\section{Approximate Support Recovery, Steps 1 and 2 of the Algorithm}
\label{sec:robustness}

In this section, we carefully study the first two steps of Algorithm
\ref{alg:modmain}. The unknown signal $\x$ is assumed to be a Gaussian
$k$-sparse vector with support set $K$, where
$k=|K|=(1+\epsilon_0)\mu_{W}(\delta)n$, for some $\epsilon_0>0$. By a
Gaussian $k$-sparse vector, we mean one where the nonzero entries are
i.i.d. Gaussian (zero mean and unit variance, say). It should be noted that the Gaussian distribution is only considered as a standard choice. We later extend our analysis to other signal distributions. The
solution $\hat{\x}$ to the
$\ell_1$ minimization obtained in step 1 of Algorithm
\ref{alg:modmain} is in all likelihood a dense vector.  The set $L$, as
defined in the algorithm, is the
$k$-support set of $\hat{\x}$ (\emph{i.e.} $L=supp_k(\hat{\x})$). We show that for small enough
$\epsilon_0$, the intersection of  $L$ and $K$ is with high
probability very large, so that $L$ can be counted as a good
approximation to $K$ (Figure \ref{fig:sigmal}).

In order to find a decent lower bound on $|L\cap K|$, we point out three
separate facts and establish a connection between them. First, we
prove a general lemma that provides a lower bound on the quantity $|L\cap K|$ as a function of
$\|\x-\hat{\x}\|_1$. Then, we discuss a critical property of
$\ell_1$ minimization known as \emph{weak robustness} which helps provide an
upper bound on the quantity $\|\x-\hat{\x}\|_1$. The robustness result is due to Xu \emph{et al.} and was first proved in \cite{isitrobust}. However, we provide explicit scaling laws for the robustness of $\ell_1$ minimization beyond the implicit results of \cite{isitrobust}. Finally, we leverage some concentration results for order statistics to derive explicit formulae for the obtained bounds. These steps will be elaborated in the remainder of this section.

\begin{definition}
For a $k$-sparse signal $\x$, we define $W(\x,\lambda)$ to be the size of the largest subset of nonzero entries of $\x$ that has a $\ell_1$ norm less than or equal to $\lambda$, i.e.,
\beq
\nonumber W(\x,\lambda) \triangleq \max\{|S|~|~S\subseteq supp(\x),~\|\x_S\|_1\leq \lambda\}.
\eeq
\end{definition}
\noindent Note that $W(\x,\lambda)$ is increasing in $\lambda$.
\begin{lem}
Let $\x$ be a $k$-sparse vector and $\hat{\x}$ be another vector. Also, let $K$ be the support set of $\x$ and $L$ be the $k$-support set of $\hat{\x}$. Then
\beq
|K\cap L|\geq k-W(\x,\|\x-\hat{\x}\|_1).
\eeq
\label{lem:deviation thm}
\end{lem}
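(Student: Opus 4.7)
The plan is to show directly that the mass of $\x$ on $K\setminus L$ is bounded by $\|\x-\hat{\x}\|_1$, and then invoke the definition of $W$. Concretely, let $S = K\setminus L$ and $T = L\setminus K$. Since $|K|=|L|=k$, we have $|S|=|T|$, so one can fix any bijection $\pi:S\to T$. The key structural fact is that $L$ contains the top-$k$ entries of $\hat{\x}$ in magnitude, so for every $i\in S$ and every $j\in T$,
\[
|\hat{\x}_i| \;\leq\; |\hat{\x}_j|.
\]
In particular, $|\hat{\x}_i|\leq |\hat{\x}_{\pi(i)}|$ for each $i\in S$.

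Next I would feed this into the triangle inequality. For $i\in S$, since $\x_i$ is an entry of the true support,
\[
|\x_i| \;\leq\; |\x_i - \hat{\x}_i| + |\hat{\x}_i| \;\leq\; |(\x-\hat{\x})_i| + |\hat{\x}_{\pi(i)}|.
\]
Now $\pi(i)\in T\subseteq L\setminus K$, so $\x_{\pi(i)}=0$ and hence $|\hat{\x}_{\pi(i)}| = |(\x-\hat{\x})_{\pi(i)}|$. Summing over $i\in S$ and using that $S$ and $T=\pi(S)$ are disjoint index sets gives
\[
\|\x_S\|_1 \;\leq\; \sum_{i\in S}|(\x-\hat{\x})_i| + \sum_{j\in T}|(\x-\hat{\x})_j| \;\leq\; \|\x-\hat{\x}\|_1.
\]

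Finally, $S\subseteq \mathrm{supp}(\x)$ and $\|\x_S\|_1\leq \|\x-\hat{\x}\|_1$, so by the very definition of $W(\x,\cdot)$ we conclude $|S|\leq W(\x,\|\x-\hat{\x}\|_1)$. Thus $|K\cap L| = k - |K\setminus L| = k - |S| \geq k - W(\x,\|\x-\hat{\x}\|_1)$, which is the claimed bound.

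There is no real obstacle here; the only nontrivial ingredient is the bijection between $K\setminus L$ and $L\setminus K$, which uses $|K|=|L|=k$, together with the top-$k$ property of $L$ to transfer the ``small'' magnitudes $|\hat{\x}_i|$ on $S$ into ``error'' magnitudes $|(\x-\hat{\x})_j|$ on $T$. The rest is just the triangle inequality and unwinding the definition of $W$.
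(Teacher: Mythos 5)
Your proof is correct. It establishes the same key inequality as the paper, namely $\|\x_{K\setminus L}\|_1\leq\|\x-\hat{\x}\|_1$, from which the bound follows by the definition of $W$, but it gets there by a different and somewhat cleaner device. The paper introduces an auxiliary optimization problem (minimize $\|\e\|_1$ subject to $\max|(\x+\e)_{K\setminus L}|\leq\min|(\x+\e)_{L}|$), observes that $\hat{\x}-\x$ is feasible, and lower-bounds the optimal value $\|\e^*\|_1$ by $\|\x_{K\setminus L}\|_1$ via a threshold argument: setting $a=\max|(\x+\e^*)_{K\setminus L}|$, each $i\in K\setminus L$ contributes at least $\max(|x_i|-a,0)$ and each $i\in L\setminus K$ contributes at least $a$, and these aggregate to $\sum_{i\in K\setminus L}|x_i|$. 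You instead work directly with the specific perturbation $\hat{\x}-\x$ and replace the aggregate threshold argument by a pairwise one: a bijection $\pi:K\setminus L\to L\setminus K$ (using $|K|=|L|=k$) together with $|\hat{\x}_i|\leq|\hat{\x}_{\pi(i)}|$ and $\x_{\pi(i)}=0$ lets the triangle inequality transfer each $|x_i|$ onto error mass at $i$ and at $\pi(i)$, and disjointness of the two sets closes the sum. Both proofs use exactly the same three structural facts ($|K\setminus L|=|L\setminus K|$, the top-$k$ property of $L$, and the vanishing of $\x$ on $L\setminus K$); your version buys a shorter argument with no detour through an optimization problem, while the paper's version isolates the reusable statement that \emph{any} perturbation realizing the support swap must have $\ell_1$ norm at least $\|\x_{K\setminus L}\|_1$.
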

\begin{proof}
Let $x_i$ be the $i$th entry of $\x$ and  $\e^* = (e_1,e_2,\dots,e_n)^T$ be the solution to the following minimization problem:
\begin{align}
&\text{minimize }\|\e\|_1\nonumber \\
\text{s.t}. &\max|(\x+\e)_{K\setminus L}| \leq \min|(\x+\e)_{L}|,
\label{eq:min aux}
\end{align}
\noindent where $K\setminus L$ denotes the subset of the entries of $K$ that are not in $L$.  Note that the vector $\hat{\x}-\x$ satisfies the constraint of the minimization problem (\ref{eq:min aux}). This is because $\x+(\hat{\x}-\x) = \hat{\x}$ and $L$ is the $k$-support of $\hat{\x}$. Therefore every entry of $\hat{\x}$ outside the set $L$ is smaller in amplitude than every entry inside $L$. Therefore since $\e^*$ is the optimal solution of (\ref{eq:min aux}) we must have:
\beq
\|\e^*\|_1 \leq \|\hat{\x}-\x\|_1.
\label{eq:aux1}
\eeq
\noindent Let $a = \max|(\x+\e^*)_{K\setminus L}|$.  Then for each
$i\in K\setminus L$, using the triangular inequality we have
\beq
|x_i|-|e_i| \leq |x_i + e_i| \leq a,~\forall i\in K\setminus L,
\label{eq:aux-+}
\eeq
\noindent and so:
\beq
|e_i| \geq \max(|x_i|-a,0),~\forall i\in K\setminus L.
\label{eq:aux-+1}
\eeq
\noindent Therefore, by summing up the inequalities in (\ref{eq:aux-+1}) for $i\in K\setminus L$ we have
\beq
\|\e^*_{K\setminus L}\|_1 \geq \sum_{i\in K\setminus L, |x_i| >a}(|x_i|-a).
\eeq
On the other hand, for all $i\in L\setminus K$, we have $|e_i| > a$, and therefore:
\beq
\|\e^*_{L\setminus K}\|_1 \geq a|L\setminus K|.
\eeq
But $|L\setminus K|=|K\setminus L|$ and hence it follows that
\bea
\|\e^*\|_1 &\geq& \|\e^*_{L\setminus K}\|_1 + \|\e^*_{K\setminus L}\|_1 \nonumber \\
&\geq& a|K\setminus L| + \sum_{i\in K\setminus L, |x_i| >a}(|x_i|-a)  \nonumber \\
&\geq& \sum_{i\in K\setminus L}|x_i| = \|\x_{K\setminus L}\|_1.
\label{eq:aux2}
\eea
\noindent (\ref{eq:aux1}) and (\ref{eq:aux2}) together imply that $\|\x-\hat{\x}\|_1\geq \|\x_{K\setminus L}\|_1$, which by definition means that $W(\x,\|\x-\hat{\x}\|_1)\geq |K\setminus L|$.
\end{proof}
We now introduce the notion of weak robustness, which
allows us to bound $\|\x-\hat{\x}\|_1$, and has the following formal
definition \cite{isitrobust}.
\begin{definition}
Let the set $S\subset\{1,2,\cdots,n\}$ and the subvector $\x_S$  be
fixed. An approximation $\hat{\x}$ to $\x$ is called weakly robust with respect to the set $S$ if, for some $C_S >
1$, it holds that
\beq
\|(\x-\hat{\x})_S\|_1 \leq \frac{2C_S}{C_S-1}\|\x_{\overline{S}}\|_1,
\label{def1}
\eeq
\noindent and
\beq
\|\x_S\|-\|\hat{\x}_S\|\leq \frac{2}{C_S-1}\|\x_{\overline{S}}\|_1.
\label{def2}
\eeq
\noindent $C_S$ is called the robustness parameter of the considered approximation for the set $S$.
\end{definition}
\noindent The weak robustness notion allows us to bound the error
in $\|\x -\hat{\x}\|_1$ in the following way. If $\hat{\x}$ is a weakly robust approximation to $\x$ with respect to the set $S$ and parameter $C_S>1$, such that $\A\x = \A\hat{\x}$, and if the matrix $\A_S$
obtained by retaining only those columns of $\A$ that are
indexed by $S$ has full column rank, then the quantity
\beq
\nonumber
~\kappa = \max_{\A\w =0, \w\neq 0} \frac{\|\w_S\|_1}{\|\w_{\overline{S}}\|_1},
\eeq
must be finite, and one can conclude that
\beq
\|\x-\hat{\x}\|_1 \leq \frac{2C_S(1+\kappa)}{C_S-1}\|\x_{\overline{S}}\|_1.
\label{eq:robustness}
\eeq
\noindent This result is due to \cite{isitrobust}, where in addition it has been shown that for Gaussian
i.i.d. measurement matrices $\A$, the solution of $\ell_1$ minimization provides a weakly robust approximation with high probability. In other words, for a randomly chosen subset $S$ with $\frac{|S|}{n} < \mu_{W}(\delta)$, there exists a robustness factor $C>1$ as a function of $\frac{|S|}{n}$ for which
(\ref{def1}) and (\ref{def2}) hold with high probability for an arbitrary vector $\x$, where $\hat{\x}$ is the solution obtained by $\ell_1$ minimization. Now let $k_1 = (1-\epsilon_1)\mu_{W}(\delta)n$ for some small
$\epsilon_1>0$, and $K_1$ be the $k_1$-support set of $\x$, namely,
the set of the largest $k_1$ entries of $\x$ in magnitude. Based on
equation (\ref{eq:robustness}) we may write
\beq
\|\x-\hat{\x}\|_1 \leq \frac{2C(\epsilon_1)(1+\kappa)}{C(\epsilon_1)-1}\|\x_{\overline{K_1}}\|_1,
\label{eq:robustness1}
\eeq
\noindent where for a fixed value of $\delta$, we have emphasized that the constant $C$ for the set $K_1$ is a function of $\epsilon_1$. Furthermore, $C(\epsilon_1)$ becomes arbitrarily close to $1$ as
$\epsilon_1\rightarrow 0$. $\kappa$ is also a bounded function of
$\epsilon_1$ and therefore we may replace it with an upper bound
$\kappa^*$. This provides a bound on $\|\x-\hat{\x}\|_1$. To explore
this inequality and understand its asymptotic behavior, we
apply a third result, which is a certain concentration bound on
the order statistics of Gaussian random variables.
\begin{lem}
Suppose $X_1,X_2,\cdots,X_N$ are $N$ i.i.d. $\mathcal{N}(0,1)$ random
variables. Let $S_N = \sum_{i=1}^{N}|X_i|$ and let $S_M$ be the sum of
the largest $M$ numbers among the $|X_i|$'s, for each
$1\leq M < N$. Then for every $\epsilon>0$ sufficiently small, as $N\rightarrow\infty$, if the ratio $M/N$ is kept constant, we have
\bea
\Prob\left(\left|\frac{S_N}{N} - \sqrt{\frac{2}{\pi}}\right| > \epsilon\right)\rightarrow 0, \label{eq:Order_stat1}\\
\Prob\left(\left|\frac{S_M}{S_N} -
\exp(-\frac{\Psi^2(\frac{M}{2N})}{2})\right|>\epsilon \right)\rightarrow 0, \label{eq:Order_stat2}
\eea
\noindent where $\Psi(x) = Q^{-1}(x)$ with $Q(x) = \frac{1}{\sqrt{2\pi}}\int_{x}^{\infty}e^{-\frac{y^2}{2}}dy$.
\label{lemma:Gaussian_Base}
\end{lem}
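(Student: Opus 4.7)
The first claim (\ref{eq:Order_stat1}) is an immediate consequence of the weak law of large numbers applied to the i.i.d.\ sequence $Y_i := |X_i|$, which has mean $\sqrt{2/\pi}$ and finite variance; a one-line Chebyshev estimate makes the failure probability tend to zero at rate $1/(\epsilon^2 N)$.

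For (\ref{eq:Order_stat2}) my plan is to replace the order-statistic sum $S_M$ with a hard-threshold sum and invoke the LLN a second time. Set $\tau := \Psi(M/2N)=Q^{-1}(M/2N)$, so $\Prob(|X_i|>\tau)=2Q(\tau)=M/N$. A direct computation using $\int_t^\infty x\phi(x)\,dx=\phi(t)$ gives $\stexp[\,|X_1|\,\mathbf{1}_{|X_1|>t}\,]=\sqrt{2/\pi}\,e^{-t^2/2}$ for every $t\ge 0$. Hence, writing $T(t):=\sum_{i=1}^{N}|X_i|\mathbf{1}_{|X_i|>t}$, the LLN yields $T(t)/N\to\sqrt{2/\pi}\,e^{-t^2/2}$ in probability. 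Dividing by $S_N/N\to\sqrt{2/\pi}$ from (\ref{eq:Order_stat1}) then gives $T(t)/S_N\to e^{-t^2/2}$. The task is to show that $S_M/S_N$ has the same limit with $t=\tau$.

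To link $T(\cdot)$ with $S_M$, fix a small $\eta>0$ and choose $\tau_1>\tau>\tau_2$ by $2Q(\tau_1)=M/N-\eta$ and $2Q(\tau_2)=M/N+\eta$. The counts $N_j:=|\{i:|X_i|>\tau_j\}|$ are Binomial with means $(M/N \mp \eta)N$, so by Chebyshev $N_1<M<N_2$ with high probability. On this event, every index contributing to $T(\tau_1)$ must belong to the top-$M$ set (since fewer than $M$ indices survive the threshold $\tau_1$), while every top-$M$ index survives the looser threshold $\tau_2$; consequently $T(\tau_1)\le S_M\le T(\tau_2)$. Combining with the concentration of $T(\tau_1),T(\tau_2),S_N$ via a union bound sandwiches $S_M/S_N$ between $e^{-\tau_1^2/2}-o(1)$ and $e^{-\tau_2^2/2}+o(1)$. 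Letting $\eta\to 0$ and using continuity of $Q^{-1}$ gives $e^{-\tau_j^2/2}\to e^{-\tau^2/2}=\exp(-\Psi^2(M/2N)/2)$, proving (\ref{eq:Order_stat2}).

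The main technical nuisance is synchronising five concentration events---the count concentrations at $\tau_1,\tau_2$, the LLN approximations for $T(\tau_1)$ and $T(\tau_2)$, and the LLN for $S_N$---within a single $\epsilon$-budget. This is routine once $\eta$ is chosen small enough that $|e^{-\tau_j^2/2}-e^{-\tau^2/2}|<\epsilon/3$ for $j=1,2$; all five events then fail with probability $o(1)$ and a union bound closes the argument. It is worth noting that nothing in the plan is genuinely Gaussian-specific beyond the closed-form expressions for $\stexp|X_1|$ and $\stexp[|X_1|\mathbf{1}_{|X_1|>t}]$; the same template---sandwich by hard thresholds chosen from the quantile function of $|X_1|$---will later be what drives the distributional generalisation promised in Section~\ref{sec:generalization}.
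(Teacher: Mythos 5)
Your argument is correct, and its skeleton coincides with the paper's: both proofs replace the order-statistic sum $S_M$ by an i.i.d.\ hard-threshold sum at the quantile $\tau=\Psi(\frac{M}{2N})$, compute the truncated mean $\sqrt{2/\pi}\,e^{-\tau^2/2}$, and combine a law of large numbers for that sum with the concentration of $S_N$ (for (\ref{eq:Order_stat1}) the two proofs are identical up to the choice of Chebyshev versus Chernoff). The one step you handle genuinely differently is the passage from the thresholded sum back to $S_M$. The paper keeps the single threshold $\tau$, lets $M'$ be the random number of exceedances so that the thresholded sum is exactly $S_{M'}$, and controls the discrepancy through the deterministic ordering inequality $|S_M-S_{M'}|/S_N\le |M-M'|/(N-M)$, finishing with a binomial concentration bound on $M'$. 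You instead perturb the threshold to $\tau_1>\tau>\tau_2$ chosen from the quantile function so that the exceedance counts bracket $M$ with high probability, sandwich $S_M$ between $T(\tau_1)$ and $T(\tau_2)$, and let $\eta\to0$ using continuity of $Q^{-1}$. Both devices cost one binomial concentration estimate; the paper's inequality yields a slightly more explicit error term, while your sandwich avoids comparing two random index sets and, as you observe, carries over verbatim to the setting of Lemma \ref{lemma:arbitrary_concentration}, where only the truncated first moment and the quantile function need to be replaced.
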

To make the proof more understandable and the paper more readable, we mention the general idea of the proof of the above lemma very coarsely here. The detailed proof is outlined in Appendix \ref{app:proof_of_order_stat}. For a particular instance of $X_1,\dots,X_N$, if $0<a<1$ is such that exactly a fraction $M/N$ of $|X_i|$'s are larger than $a$, then every $|X_i|$ which is larger than $a$ contributes to the sum $S_M$. Therefore $S_M$ can be thought of as those $|X_i|$'s that are larger than $a$. This can be expressed in another way. Let $\hat{X}_i$ be a random variable which is equal to $|X_i|$ if $|X_i|>a$ and is $0$ otherwise. We therefore conclude that $S_M$ is equal to the sum of $\sum_{i=1}^n{\hat{X_i}}$. Furthermore, when $N$ is large, it can be shown using concentration lemmas that $a$ will be arbitrarily close to the fixed number $\Psi(\frac{M}{2N})$, and thus the distributions of $\hat{X_i}$'s converge to the same distribution, namely the truncated absolute value of a normal distribution. Besides, when $a$ is constant $\hat{X_i}$'s are independent and therefore one can apply the law of large numbers to conclude that $S_M/S_N\approx \stexp \hat{X}_1/\stexp |X_1|$, which is the desired conclusion. These arguments are rigorously outlined in  Appendix \ref{app:proof_of_order_stat}.

Recall that we assumed that $\x$ is a $k$-sparse random Gaussian signal with $k=(1+\epsilon_0)\mu_W(\delta)n$, and we defined $K_1$ to be the $k_1$-support of $\x$, where $k_1 = (1-\epsilon_1)\mu_W(\delta)n$. We denoted by $K$ the support set of $\x$. Also, if $\hat{\x}$ is the approximation to $\x$ obtained by $\ell_1$ minimization, we denoted by $L$ the $k$-support set of $\hat{\x}$. As a direct consequence of Lemma \ref{lemma:Gaussian_Base} we can write:
\beq
\Prob\left(\left|\frac{\|\x_{\overline{K_1}}\|_1}{\|\x\|_1} - (1-e^{-0.5\Psi^2(0.5\frac{1-\epsilon_1}{1+\epsilon_0})})\right|>\epsilon\right)\rightarrow 0,
\label{eq:kbar bound}
\eeq
\noindent for $\epsilon>0$ sufficiently small as $n\rightarrow\infty$. Define
\beq
\zeta(\epsilon_0) \triangleq \inf_{\epsilon_1>0}\frac{2C(\epsilon_1)(1+\kappa^*)}{C(\epsilon_1)-1}(1-e^{-0.5\Psi^2(0.5\frac{1-\epsilon_1}{1+\epsilon_0})}).
\label{eq:zeta}
\eeq

\noindent Incorporating (\ref{eq:robustness1}) into (\ref{eq:kbar bound}) we may write
\beq
\Prob\left(\frac{\|\x-\hat{\x}\|_1}{\|\x\|_1} - \zeta(\epsilon_0) < \epsilon\right)\rightarrow 1,
\label{eq:robustness2}
\eeq
\noindent for $\epsilon>0$ sufficiently small as $n\rightarrow\infty$.

Let us summarize our conclusions so far. First, we were able to show that
$|K\cap L|\geq k-W(\x,\|\x-\hat{\x}\|_1)$. The weak
robustness of $\ell_1$ minimization and the Gaussianity of the signal then
led us to the fact that for large $n$ with high probability $\|\x-\hat{\x}\|_1
\leq \zeta(\epsilon_0)\|\x\|_1$. These results build up the next key
theorem, which is the conclusion of this section.
\begin{thm}[Approximate Support Recovery]
Let $\A$ be an i.i.d. Gaussian $m\times n$ measurement matrix with
$\frac{m}{n}=\delta$. Let $k=(1+\epsilon_0)\mu_{W}(\delta)$ and $\x$
be an $n\times1$ random Gaussian $k$-sparse  signal. Suppose that
$\hat{\x}$ is the approximation to $\x$ given by $\ell_1$ minimization, i.e. $\hat{\x}=argmin_{\A\z=\A\x}\|\z\|_1$. Then, as
$n\rightarrow\infty$, for all $\epsilon>0$,
\beq
\small
\Prob\left(\frac{|supp(\x) \cap supp_k(\hat{\x})|}{k} -
2Q(\sqrt{-2\log(1-\zeta(\epsilon_0))})>-\epsilon\right)\rightarrow 1,
\label{eq:support recovery}
\eeq
\noindent where $\zeta(\cdot)$ is defined in (\ref{eq:zeta}).
\label{thm:l_1 support recovery}
\end{thm}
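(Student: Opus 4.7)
The plan is to concatenate the three ingredients already assembled in this section: the deterministic overlap bound of Lemma \ref{lem:deviation thm}, the weak-robustness consequence (\ref{eq:robustness2}), and the order-statistic concentration of Lemma \ref{lemma:Gaussian_Base}. Fix a small target tolerance $\epsilon>0$; the argument will produce $|K\cap L|/k \ge 2Q(\sqrt{-2\log(1-\zeta(\epsilon_0))})-\epsilon$ on an event of probability going to one.

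First I would invoke Lemma \ref{lem:deviation thm}, which gives $|K\cap L|\ge k-W(\x,\|\x-\hat{\x}\|_1)$ deterministically. Combining this with (\ref{eq:robustness2}) and the monotonicity of $W(\x,\cdot)$ in its second argument, on a high-probability event I may replace $\|\x-\hat{\x}\|_1$ by $(\zeta(\epsilon_0)+\epsilon/4)\|\x\|_1$. This step absorbs all dependence on the measurement matrix $\A$ into the robustness event and reduces the claim to a purely Gaussian statement about the $k$ i.i.d.\ $\mathcal{N}(0,1)$ nonzero entries of $\x$. Next, I would translate the condition $W(\x,\lambda\|\x\|_1)\le W_0$ into an event on the ordered partial sums $S_M$ of Lemma \ref{lemma:Gaussian_Base}: it is equivalent to $S_{t_0}<(1-\lambda)\|\x\|_1$ with $t_0=k-W_0-1$, since subtracting the sum $S_{t_0}$ of the largest $t_0$ magnitudes from $\|\x\|_1$ recovers the $\ell_1$ mass of the smallest $W_0+1$ entries. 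Setting $W_0=\lfloor k(1-2Q(\sqrt{-2\log(1-\zeta(\epsilon_0))})+\epsilon)\rfloor$ forces $t_0/(2k)$ to lie strictly below $Q(\sqrt{-2\log(1-\zeta(\epsilon_0))})$; because $\Psi=Q^{-1}$ is strictly decreasing, the limiting value $\exp(-\Psi^2(t_0/(2k))/2)$ around which $S_{t_0}/\|\x\|_1$ concentrates sits strictly below $1-\zeta(\epsilon_0)-\epsilon/4$, with a gap depending only on $\epsilon$.

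Finally, Lemma \ref{lemma:Gaussian_Base} applied with $N=k$ and $M=t_0$ shows that $S_{t_0}/\|\x\|_1$ is within any preassigned tolerance of $\exp(-\Psi^2(t_0/(2k))/2)$ with probability tending to one; choosing that tolerance smaller than the gap produced in the previous step yields $S_{t_0}<(1-\zeta(\epsilon_0)-\epsilon/4)\|\x\|_1$ with high probability, and intersecting with the robustness event gives $W(\x,\|\x-\hat{\x}\|_1)\le W_0$, hence $|K\cap L|\ge k-W_0\ge k\bigl(2Q(\sqrt{-2\log(1-\zeta(\epsilon_0))})-\epsilon\bigr)$. I expect the main obstacle to be the bookkeeping among the three ``small quantities''---the slack in (\ref{eq:robustness2}), the fluctuation allowed by Lemma \ref{lemma:Gaussian_Base}, and the target $\epsilon$ in the statement---chosen so that the strictly positive gap $(1-\zeta(\epsilon_0))-\exp(-\Psi^2(t_0/(2k))/2)$ dominates both sources of randomness. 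Continuity of $Q$ and $\Psi$ turns this into a routine compactness check, and no further probabilistic input is needed beyond the lemmas already proved.
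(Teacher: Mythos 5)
Your proposal is correct and follows essentially the same route as the paper's proof: combine Lemma \ref{lem:deviation thm} with the robustness bound (\ref{eq:robustness2}) and the monotonicity of $W(\x,\cdot)$, then control $W(\x,\alpha\|\x\|_1)$ via the Gaussian order-statistics concentration, and finally let the auxiliary slacks tend to zero. The only difference is cosmetic: you re-derive the content of Lemma \ref{lem:W(x,a)} directly from Lemma \ref{lemma:Gaussian_Base} (translating the bound on $W$ into the event $S_{t_0}<(1-\lambda)\|\x\|_1$, exactly as Appendix \ref{app:proof_of_W(x,ax)} does) instead of citing that lemma.
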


Before proving the above theorem, we mention the following useful lemma, the proof of which will be given in Appendix \ref{app:proof_of_W(x,ax)}.
\begin{lem}
Let $\x$ be a random $k$-sparse Gaussian vector of size $n$, and $0<\alpha<1$. For any positive $\epsilon$, the following happens with high probability as $n,k\rightarrow \infty$:
\beq
\frac{W(\x,\alpha\|\x\|_1)}{k}<(1-2Q(\sqrt{-2\log(1-\alpha)}))+\epsilon.
\eeq
\label{lem:W(x,a)}
\end{lem}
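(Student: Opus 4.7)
The plan is to reduce the statement to a direct application of Lemma~\ref{lemma:Gaussian_Base}. First I would observe that among all subsets $S\subseteq supp(\x)$ with a fixed cardinality, the one minimizing $\|\x_S\|_1$ is obtained by picking the coordinates with the smallest absolute values. Consequently, if we let $|x|_{(1)}\le |x|_{(2)}\le\cdots\le |x|_{(k)}$ be the order statistics of the nonzero entries of $\x$ and write $S_k=\|\x\|_1$, $T_M=\sum_{j=k-M+1}^{k}|x|_{(j)}$ for the sum of the largest $M$ of these absolute values, then
\[
W(\x,\alpha S_k) \;=\; k - M^*(\x), \qquad M^*(\x) \;\triangleq\; \min\{M:\, T_M \ge (1-\alpha) S_k\}.
\]
So the claimed upper bound on $W/k$ is equivalent to a lower bound $M^*/k > 2Q\bigl(\sqrt{-2\log(1-\alpha)}\bigr)-\epsilon$ with high probability.

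Next I would analyze the deterministic function $g(\gamma)\triangleq \exp\!\bigl(-\tfrac12\Psi^2(\gamma/2)\bigr)$, which by Lemma~\ref{lemma:Gaussian_Base} is the high-probability limit of $T_{\lfloor \gamma k\rfloor}/S_k$ as $k\to\infty$ for any fixed $\gamma\in(0,1)$. Solving $g(\gamma)=1-\alpha$ using $\Psi=Q^{-1}$ gives the unique crossing point $\gamma^{*}=2Q\bigl(\sqrt{-2\log(1-\alpha)}\bigr)$, and monotonicity of $\Psi$ implies $g(\gamma)<1-\alpha$ for $\gamma<\gamma^{*}$. Pick $\gamma_{-}=\gamma^{*}-\epsilon$ and set $\eta=1-\alpha-g(\gamma_{-})>0$.

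I would then apply Lemma~\ref{lemma:Gaussian_Base} (with $N=k$ and $M=\lfloor\gamma_{-}k\rfloor$, which is a \emph{deterministic} ratio) to conclude that the event
\[
\mathcal{E}\;\triangleq\;\left\{\,\Bigl|\tfrac{S_k}{k}-\sqrt{\tfrac{2}{\pi}}\Bigr|<\tfrac{\eta}{4}\sqrt{\tfrac{2}{\pi}} \;\text{ and }\; \Bigl|\tfrac{T_{\lfloor\gamma_{-}k\rfloor}}{S_k}-g(\gamma_{-})\Bigr|<\tfrac{\eta}{2}\,\right\}
\]
has probability tending to $1$. On $\mathcal{E}$ one has $T_{\lfloor \gamma_{-}k\rfloor}/S_k < g(\gamma_{-})+\eta/2 = (1-\alpha)-\eta/2 < 1-\alpha$, hence $T_{\lfloor\gamma_{-}k\rfloor}<(1-\alpha)S_k$, which by the definition of $M^*$ forces $M^*>\gamma_{-}k$. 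Translating back, $W(\x,\alpha S_k)/k = 1 - M^*/k < 1-\gamma_{-} = \bigl(1-2Q(\sqrt{-2\log(1-\alpha)})\bigr)+\epsilon$.

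The one genuinely subtle point is that Lemma~\ref{lemma:Gaussian_Base} is stated only for a \emph{fixed} ratio $M/N$, whereas $M^*$ is a random index. I would circumvent this by testing at the fixed deterministic level $\gamma_{-}k$ as above, using monotonicity of $M\mapsto T_M$ to upgrade a single-point concentration statement into the needed bound on the random threshold $M^*$. Everything else is a routine application of $Q^{-1}$ monotonicity and the continuity of $g$, and no further Gaussian-specific calculation is needed beyond what is already packaged in Lemma~\ref{lemma:Gaussian_Base}.
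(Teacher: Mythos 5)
Your proof is correct and takes essentially the same route as the paper's (Appendix~\ref{app:proof_of_W(x,ax)}): both reduce the claim to a single application of Lemma~\ref{lemma:Gaussian_Base} at a deterministic fraction offset by $\epsilon$ from the critical value $2Q(\sqrt{-2\log(1-\alpha)})$, then use monotonicity of the partial sums of order statistics to conclude. Your write-up is if anything more careful than the paper's (which tests directly at the level $k(\beta+\epsilon)$ and contains some typos), in particular in making explicit that the concentration lemma applies only at a fixed ratio $M/N$ and must be transferred to the random threshold via monotonicity.
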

\begin{proof}[Proof of Theorem \ref{thm:l_1 support recovery}]
From equation (\ref{eq:robustness2}), for every $\epsilon'>0$ and large enough $n$, with high probability we have
$\|\x-\hat{\x}\|_1<(\zeta(\epsilon_0)+\epsilon')\|\x\|_1$. Therefore,
from Lemma \ref{lem:deviation thm} and the fact that $W(\x,\lambda)$
is increasing in $\lambda$, $|K\cap L| \geq k -
W(\x,(\zeta(\epsilon_0)+\epsilon')\|\x\|_1)$ with high
probability. Replacing for $W(\x,(\zeta(\epsilon_0)+\epsilon'))$ with the upper bound given by  Lemma \ref{lem:W(x,a)}, it follows that with very high
probability $\frac{|K\cap L|}{k}\geq
2Q(\sqrt{-2\log(1-\zeta(\epsilon_0)-\epsilon')})-\epsilon''$. We can now let $\epsilon'$ go to zero and the proof is completed.
\end{proof}
\noindent Note that if $\lim_{\epsilon_0\rightarrow0}\zeta(\epsilon_0)=0$, then  Theorem \ref{thm:l_1 support recovery} implies that $\frac{|K\cap L|}{k}$ becomes arbitrarily close to 1, which means that using $\ell_1$ minimization it is possible to closely estimate the support set of $\x$. We show in the sequel that this is in fact the case.

\subsection{Scaling Law of $\ell_1$ Minimization}
In order to show that the robust approximation of the sparse signal at step 1 of Algorithm \ref{alg:modmain} leads to perfect recovery at step 3, we need to obtain an explicit bound for the term $\zeta(\epsilon_0)$. This in turn requires calculating a solid relationship between the robustness parameter $C(\epsilon_1)$, and the back-off fraction $\epsilon_1$. For i.i.d. Gaussian matrices, we derive an explicit lower bound on $C(\epsilon_1)$ as a function of $\epsilon_1$ through the following theorem, the proof of which appears in Appendix \ref{app:proof_of_scaling}.

\begin{thm}[Scaling law of $\ell_1$ minimization for Gaussians.]
Let $\A$ be an $m\times n$ i.i.d. Gaussian matrix with $m=\delta n$, and $\mu_{W}(\delta)$ be the weak recovery threshold of $\ell_1$ minimization for $A$. For sufficiently large $n$, the (weak) robustness parameter $C(\epsilon_1)$ for a randomly chosen $k_1$-support $K_1$ of size $k_1=(1-\epsilon_1)\mu_{W}(\delta)n$  (see equation \ref{eq:robustness1}) satisfies:
\beq
C(\epsilon_1) \geq \frac{1}{\sqrt{1-\epsilon_1}}.
\label{eq:C_vs_eps}
\eeq
\label{thm:scale}
\end{thm}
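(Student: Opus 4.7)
The plan is to recast the bound on $C(\epsilon_1)$ as a quantitative null space property (NSP) on the sub-threshold set $K_1$ and then derive the $\sqrt{1-\epsilon_1}$ scaling by bootstrapping from the boundary NSP that characterizes the weak threshold at $k_W = \mu_W(\delta) n$.

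First, following the weak robustness framework of \cite{isitrobust} underlying (\ref{def1})--(\ref{def2}), $C(\epsilon_1)$ is essentially the best constant $c>1$ such that the quantitative NSP
\begin{equation}
\|\w_{K_1}\|_1 \leq \tfrac{1}{c}\,\|\w_{\overline{K_1}}\|_1
\end{equation}
holds uniformly over $\w \in \ker \A$, with high probability over $\A$ and the uniformly random $K_1$ of size $k_1=(1-\epsilon_1)k_W$. Once this NSP is in hand with the appropriate $c$, a standard LP duality argument reproduces (\ref{def1})--(\ref{def2}) with $C_{K_1} = c$. So the theorem reduces to establishing this NSP with $c = 1/\sqrt{1-\epsilon_1}$.

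Second, I would bootstrap from the weak threshold via a coupling argument. Draw $K_1$ uniformly of size $k_1$, then draw $T \subset \overline{K_1}$ uniformly of size $\epsilon_1 k_W$, and set $K_W = K_1 \cup T$. Marginally $K_W$ is uniform of size $k_W$, so the defining property of $\mu_W(\delta)$ supplies the qualitative NSP $\|\w_{K_W}\|_1 \leq \|\w_{\overline{K_W}}\|_1$ for every $\w \in \ker \A$ with high probability. Decomposing the norms along $K_1$ and $T$ gives the additive slack
\begin{equation}
\|\w_{K_1}\|_1 + 2\|\w_T\|_1 \leq \|\w_{\overline{K_1}}\|_1,
\end{equation}
governed by the contribution of the auxiliary set $T$.

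Third, the main obstacle is to upgrade this additive slack into the \emph{multiplicative} factor $\sqrt{1-\epsilon_1}$ uniformly in $\w$. Naive averaging of $\|\w_T\|_1$ over the random $T$ yields only an $O(\epsilon_1)$ correction, because the NSP is an extremal statement and the worst-case $\w$ may concentrate on an adversarial coordinate pattern that makes $\|\w_T\|_1$ small. The correct scaling should come instead from the Grassmann-angle (escape-through-a-mesh) characterization of $\mu_W(\delta)$ of Donoho and Tanner \cite{D,DT}: the probability that the quantitative NSP fails at slack $c$ can be written as a sum of exterior angles of $k_1$-dimensional faces of the cross-polytope, whose dominant large-deviations exponent vanishes at the critical pair $(c,\mu)=(1,\mu_W(\delta))$. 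Taylor-expanding this exponent around the critical point, the first-order vanishing produces square-root scaling of the admissible slack in the back-off $\epsilon_1$, giving precisely the $\sqrt{1-\epsilon_1}$ factor. Making this rigorous---uniformly in $\w \in \ker \A$ and with a concrete high-probability lower bound on $C(\epsilon_1)$ over the Gaussian $\A$---is the technically delicate step that the appendix needs to supply.
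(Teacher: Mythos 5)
Your proposal does not close, and the gap is in all three of its steps. First, the reduction in your opening step replaces weak robustness with the uniform null space property $\|\w_{K_1}\|_1\le \tfrac1c\|\w_{\overline{K_1}}\|_1$ for all $\w\in\ker\A$. That is the \emph{sectional} condition (uniform over all sign patterns on $K_1$), which fails at sparsity levels near $\mu_W(\delta)n$; the correct characterization, which the paper states as the lemma containing (\ref{eq:Grasswthmeq1}), is the sign-pattern--specific condition $\|\x_{K}+\w_{K}\|_1+\|\w_{\overline{K}}/C\|_1\ge\|\x_{K}\|_1$ for the fixed $\x_{K}$. The same conflation undermines your coupling step: the weak threshold at $k_W=\mu_W(\delta)n$ does not supply $\|\w_{K_W}\|_1\le\|\w_{\overline{K_W}}\|_1$ for all $\w\in\ker\A$, so the additive-slack inequality you derive from it has no valid starting point --- and, as you yourself note, even if it did, it would only yield an additive $O(\epsilon_1)$ correction rather than the multiplicative constant.

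Second, the mechanism you invoke to rescue the argument --- Taylor-expanding the Grassmann-angle large-deviations exponent about the critical point and extracting ``square-root scaling of the admissible slack'' --- is not what produces the bound, and its predicted functional form is wrong: $1/\sqrt{1-\epsilon_1}-1=\epsilon_1/2+O(\epsilon_1^2)$, so the admissible slack here is \emph{linear} in the back-off, not square-root. The paper's proof is instead an exact, non-perturbative identification of two complementary Grassmann angles. It introduces an auxiliary weighted polytope $\text{WSP}$ with weight $C_\infty\to\infty$ on a set $\overline{K_1}$ chosen so that $|K\cup\overline{K_1}|=\mu_W(\delta)n$, shows that only faces containing $\overline{K_1}$ contribute, and then matches each internal angle $\beta(F,G)$ with $2^{-(k_1-k)}\beta(F_1,G_1)$ for the unweighted cross-polytope at size $k_1=\mu_W(\delta)n$. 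The value $C=1/\sqrt{1-\epsilon_1}$ falls out of the algebraic requirement that the simplex inner product in (\ref{eq:internal}) satisfy $\frac{1}{1+C^2k}=\frac{1}{1+k_1}$, i.e. $C^2k=k_1$, after which $P'=P_1\to 0$ term by term. No expansion around criticality is needed, and none would produce the stated constant.
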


We now derive an asymptotic upper bound on the term $\zeta(\epsilon_0)$ using the above relationship. Replacing the bound of (\ref{eq:C_vs_eps}) in the definition of $\zeta(\epsilon_0)$, we obtain:
\bea
\nonumber \zeta(\epsilon_0) &=& \inf_{\epsilon_1>0}\frac{2C(\epsilon_1)(1+\kappa^*)}{C(\epsilon_1)-1}\left(1-e^{-0.5\Psi^2(0.5\frac{1-\epsilon_1}{1+\epsilon_0})}\right) \\
&\leq& \inf_{\epsilon_1>0}\frac{2(1+\kappa^*)}{1-\sqrt{1-\epsilon_1}}\left(1-e^{-0.5\Psi^2(0.5\frac{1-\epsilon_1}{1+\epsilon_0})}\right)  \\
&\leq& \frac{4(1+\kappa^*)}{\epsilon_0}\left(1-e^{-0.5\Psi^2(0.5\frac{1-\epsilon_0}{1+\epsilon_0})}\right),
\label{eq:zbound2}
\eea
\noindent where (\ref{eq:zbound2}) is obtained by simply taking $\epsilon_1=\epsilon_0$, and using the fact that $\frac{1}{1-\sqrt{1-\epsilon_0}}\leq 2/\epsilon_0$. We use the Taylor approximation of the inverse error function to bound the right hand side of (\ref{eq:zbound2}). Note that:
\bea
\Psi(0.5\frac{1-\epsilon_0}{1+\epsilon_0}) &=& \sqrt{2}\cdot\text{erf}^{-1}\left(\frac{2\epsilon_0}{1+\epsilon_0}\right)\\
&=&  \sqrt{2\pi}\cdot\epsilon_0 + \text{\textit{o}}(\epsilon_0^2).
\eea
\noindent It follows that:
\beq
\zeta(\epsilon_0) \leq 4\pi(1+\kappa^*)\epsilon_0 + \mathcal{O}(\epsilon_0^2),
\eeq
\noindent As $\epsilon_0\rightarrow 0$. Therefore, we can immediately see that $\lim_{\epsilon_0\rightarrow0}\zeta(\epsilon_0)=0$.

 \section{Perfect Recovery, Step 3 of the Algorithm}
\label{sec:perfect recovery}
In Section \ref{sec:robustness} we showed that if
$\epsilon_0$ is small, the $k$-support of $\hat{\x}$, namely
$L=supp_k(\hat{\x})$, has a significant overlap with the true support of
$\x$. We even found a quantitative lower bound on the size of this overlap
in Theorem \ref{thm:l_1 support recovery}. In step 3 of
Algorithm \ref{alg:modmain}, weighted $\ell_1$ minimization is used,
where the entries in  $\overline{L}$ are assigned a higher weight
than those in $L$. In \cite{Khajehnejad_weighted}, we have been able to
analyze the performance of such weighted $\ell_1$ minimization
algorithms. The idea is that if a sparse vector $\x$ can
be partitioned into two sets $L$ and $\overline{L}$, where in one set
the fraction of non-zeros is much larger than in the other set, then
(\ref{eq:weighted l_1}) can potentially recover $\x$ with an
appropriate choice of the weight $\omega > 1$, even though
$\ell_1$ minimization cannot.  The following theorem can be deduced
from the computations of \cite{journalweighted}.
\begin{thm}
Let $L\subset \{1,2,\cdots,n\}$ , $\omega>1$ and the fractions
$f_1,f_2\in[0,1]$ be given. Let $\gamma_1 = \frac{|L|}{n}$ and
$\gamma_2=1-\gamma_1$. There exists a threshold
$\lambda_c(\gamma_1,\gamma_2,f_1,f_2,\omega)$ such that with high
probability, almost all random sparse vectors $\x$ with \emph{at least}
$f_1\gamma_1n$ nonzero entries over the set $L$, and \emph{at most}
$f_2\gamma_2n$ nonzero entries over the set $\overline{L}$ can be
perfectly recovered using
$\min_{\A\z=\A\x}\|\z_L\|_1+\omega\|\z_{\overline{L}}\|_1$, where $\A$
is a $\lambda_c n\times n$ matrix with i.i.d. Gaussian entries.
\label{thm:delta}
\end{thm}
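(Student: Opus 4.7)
The plan is to reduce the recovery question to a null-space condition for weighted $\ell_1$ and then to an integral-geometric (Grassmann angle) computation of the kind pioneered by Donoho and Tanner, adapted to accommodate the non-uniform weighting between $L$ and $\overline{L}$. First I would establish a necessary and sufficient null-space property: the vector $\x$ is the unique minimizer of $\min\|\z_L\|_1+\omega\|\z_{\overline{L}}\|_1$ subject to $\A\z=\A\x$ if and only if, for every nonzero $\w$ in the null space of $\A$,
\begin{equation}
\sum_{i\in K\cap L}\mathrm{sgn}(x_i)\,w_i+\omega\sum_{i\in K\cap\overline{L}}\mathrm{sgn}(x_i)\,w_i < \sum_{i\in L\setminus K}|w_i|+\omega\sum_{i\in\overline{L}\setminus K}|w_i|,
\end{equation}
where $K=\mathrm{supp}(\x)$. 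This is obtained by checking optimality of $\x$ against perturbations $\x+\w$ with $\w\in\ker(\A)$, and is the direct analogue of the classical $\ell_1$ null-space property with the indices in $\overline{L}$ scaled by $\omega$.

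Next I would invoke the fact that for an i.i.d. Gaussian $\A$ of size $\lambda n\times n$, the null space is a uniformly random $(1-\lambda)n$-dimensional subspace of $\mathbb{R}^n$, which is orthant symmetric. This reduces the question to computing the probability that a uniformly random subspace avoids the weighted cone of ``bad'' directions determined by the condition above for a fixed signed support pattern consistent with the parameters $(\gamma_1,\gamma_2,f_1,f_2)$. I would express this probability via the complementary Grassmann angle of a weighted cross-polytope-type cone: rescale the coordinates by $1$ on $L$ and $\omega$ on $\overline{L}$ so that the weighted $\ell_1$ ball becomes an unweighted cross-polytope in the rescaled geometry, and then apply the Affentranger–Schneider formula for internal and external angles of this polytope to express the failure probability as a double sum over faces, exactly as in the Donoho–Tanner framework.

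The critical computation, and the main obstacle, is to identify the exponential decay rate of this sum and to determine for which $\lambda$ it is negative. Here one uses Laplace's method on the product of external angles (Gaussian tail integrals determined by $\omega$ and the sign pattern) and internal angles (volumes of spherical simplices expressible via moment-generating functions of truncated normals), as in \cite{journalweighted}. The combinatorial count must be modified to reflect the constraint that at least $f_1\gamma_1 n$ nonzeros lie in $L$ and at most $f_2\gamma_2 n$ nonzeros lie in $\overline{L}$; this only restricts the range of summation, so the exponent is a supremum of smooth functions of the face fractions, and $\lambda_c(\gamma_1,\gamma_2,f_1,f_2,\omega)$ is defined as the smallest $\lambda$ making this supremum strictly negative.

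Finally I would conclude by a union bound over signed support patterns compatible with $(f_1,f_2,\gamma_1,\gamma_2)$: since the number of such patterns is at most exponential in $n$ with a rate that is absorbed by the negative exponent for $\lambda>\lambda_c$, the probability that any such $\x$ fails to be recovered tends to zero as $n\to\infty$. The hardest part is the explicit verification that the supremum defining the exponent is strictly negative for some finite $\lambda<1$, which requires a delicate optimization over the face parameters; for this I would invoke the corresponding derivation in \cite{journalweighted}, where $\lambda_c$ is characterized as the unique solution of a transcendental equation involving $\omega$, $\gamma_i$, and $f_i$, and is strictly less than the unweighted Donoho–Tanner weak threshold whenever $f_1>f_2$ and $\omega$ is chosen appropriately.
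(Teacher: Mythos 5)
Your overall route---null-space characterization of weighted $\ell_1$ optimality, reduction to a complementary Grassmann angle for a uniformly random null space, the internal/external angle decomposition, Laplace asymptotics for the exponents, and handling the ``at least $f_1\gamma_1 n$ / at most $f_2\gamma_2 n$'' constraint by optimizing over the admissible exact face fractions---is the same route the paper takes; indeed the paper does not reprove this theorem but cites \cite{journalweighted} and reproduces the resulting threshold formula in Appendix~\ref{App:delta_c} (where $\lambda_c$ is obtained by maximizing the sectional threshold $\delta_c^{(T)}$ over exact fractions $f_1'\geq f_1$ with $f_1'k+f_2'(n-k)=k$). Your null-space condition and the final union-bound step are fine (the union over sign patterns is stronger than the ``almost all'' claim requires, so it still proves existence of a threshold, albeit a more conservative one than the sectional threshold the paper uses downstream).

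The one step that would fail as written is the rescaling. Dilating the coordinates on $\overline{L}$ by $\omega$ does turn the weighted ball into the standard cross-polytope, but it simultaneously maps $\ker(\A)$ to an anisotropic linear image of itself, which is \emph{not} uniformly distributed on the Grassmannian; the Affentranger--Schneider/Grassmann-angle formula applies only to a uniformly random subspace. Taken literally, your reduction would make the failure probability independent of $\omega$ and reproduce the unweighted Donoho--Tanner threshold, which is exactly the wrong conclusion. The correct move---and what \cite{journalweighted} and Appendix~\ref{App:delta_c} actually do---is to keep the null space uniform and compute the internal and external angles of the skewed polytope $\{\y:\|\y_L\|_1+\omega\|\y_{\overline{L}}\|_1\leq 1\}$ directly; this is where the $\omega$-dependence of $\psi_{int}$ and $\psi_{ext}$ (e.g., the inner products $\tfrac{1}{1+\omega^2 k}$-type terms and the $G(\omega x)$ factors) comes from. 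Replace the rescaling paragraph with a direct angle computation for the skewed polytope and the rest of your outline goes through.
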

\noindent For completeness, in Appendix \ref{App:delta_c}, we provide
the calculation of $\lambda_c(\gamma_1,\gamma_2,f_1,f_2,\omega)$, based on the calculations of \cite{journalweighted}. A software package for computing such thresholds can also be found in
\cite{sotware link}.

\begin{proof}[Proof of Theorem \ref{thm: final thm}]
Recall that the solution of $\ell_1$ minimization in the first stage of Algorithm (\ref{alg:modmain}) is the vector $\hat{\x}$. We denoted by $L$ the $k$-support set of $\hat{\x}$, and by $L^c$ its complement set. The last stage of the algorithm is a weighted $\ell_1$ minimization that puts more weight on the entries of $\x$ outside the set $L$. The justification for this is the fact that the fraction of the nonzero entries of the target signal $\x$ over the set $L$ is supposedly larger than the fraction of the nonzero entries over $L^c$. Let us denote these fractions by $f_1$ and $f_2$ respectively, namely  $f_1=\frac{|L\cap K|}{|L|}$ and $f_2=\frac{|\overline{L}\cap
K|}{|\overline{L}|}$, where $K$ is the support of the target signal, unknown to the algorithm before running the weighted $\ell_1$ minimization of the last stage. Since we are using a weighted $\ell_1$ minimization, $\x$ will be recovered perfectly with high probability if the number of measurements is large than the threshold of weighted $\ell_1$ minimization for the nonuniform sparsity model of the target signal, namely if:
\beq
\lambda_c(\frac{k}{n},1-\frac{k}{n},f_1,f_2,\omega) \leq \delta,
\eeq
\noindent where $\lambda_c$ was defined in Theorem \ref{thm:delta} and was characterized in \cite{journalweighted}. On the other hand, through Theorem \ref{thm:l_1 support recovery}, we provided a lower bound on $f_1$ (and consequently an upper bound on $f_2$) and we showed that as $\epsilon_0\rightarrow 0$, $f_1$ converges to 1 (and consequently $f_2$ approaches zero). The asymptotic value of  $\lambda_c(\frac{k}{n},1-\frac{k}{n},f_1,f_2,\omega)$ will therefore be
equal to $\lambda_c(\mu_{W}(\delta),1-\mu_{W}(\delta),1,0,\omega)$, as $\epsilon\rightarrow 0$ (Recall that $k = (1+\epsilon_0)\mu_W(\delta)n$). Furthermore, from the computations of \cite{journalweighted}, it can be shown that $\lambda_c(\mu_{W}(\delta),1-\mu_{W}(\delta),1,0,\omega) < \delta$ for an appropriate choice of $\omega >1$, and that for a fixed $\omega$, the function $\lambda_c(\gamma_1,\gamma_2,f_1,f_2)$ is a continuous function of $\gamma_1$, $f_1$ and $f_2$. Furthermore, $k$, the lower bound on $f_1$ and the upper bound on $f_2$ obtained from Theorem \ref{thm:l_1 support recovery} are all continuous functions of $\epsilon_0$ in this case. Therefore, we can conclude that for a strictly positive $\epsilon_0$ and corresponding overlap fractions $f_1$ and
$f_2$, $\lambda_c((1+\epsilon_0)\mu_{W}(\delta),1-(1+\epsilon_0)\mu_{W}
(\delta),f_1,f_2,\omega)< \delta$. This means that for some strictly positive $\epsilon_0$ the number of measurements that is required to reconstruct the signal precisely in the last stage of the algorithm is less than the number of measurements in $\A$, i.e. $\x$ will be recovered with high probability, despite the fact that it has more nonzero entries that the weak threshold of $\ell_1$ minimization. This completes the proof.
\end{proof}
\section{Generalization to Beyond Gaussians}
\label{sec:generalization}
The theoretical threshold improvement of the proposed iterative $\ell_1$ minimization algorithm was demonstrated for the case of i.i.d. Gaussian matrices, and sparse vectors with independent Gaussian nonzero entries.  It is reasonable to ask if we can extend these results to sparse signals with other distributions. We address this problem in this section. In summary, we prove that the theoretical threshold improvement can be generalized to sparse signals whose nonzero entries obey a more general class of distributions, namely continuous symmetric distributions with a non-vanishing finite order derivative at the origin.  This is outlined in the following section.

\subsection{Arbitrary Distributions}
The attentive reader will note that the only step where we used the Gaussianity of
the signal in the proof of threshold improvement was in the the order statistics results of Lemma
\ref{lemma:Gaussian_Base}. This result has the following
interpretation. For $N$ i.i.d. random variables, the
ratio $\frac{S_M}{S_N}$ can be approximated by a known function of $\frac{M}{N}$. In the Gaussian
case, this function behaves as $1-(1-\frac{M}{N})^2$, as $M\rightarrow
N$. For constant magnitude signals (say BPSK), the function behaves as
$\frac{M}{N}$, for $M\rightarrow N$, which predicts that the
reweighted method yields no improvement. A more careful analysis reveals that the improvement
over $\ell_1$ minimization depends on the behavior of
$\frac{S_M}{S_N}$, as $M\rightarrow N$, which in term depends on the
smallest order $n$ for which $f^{(n)}(0)\neq 0$, i.e., the smallest
$n$ such that the $n$-th derivative of the distribution at the origin is nonzero. We formalize these results by generalizing the arguments of the previous section. First, we present a generalization of Lemma \ref{lemma:Gaussian_Base}  for arbitrary symmetric distributions.

\begin{lem}
Suppose $X,X_1,X_2,\cdots,X_n$ are $N$ i.i.d. random
variables, drawn from a symmetric distribution  $f(\cdot)$. Let $S_N = \sum_{i=1}^{N}|X_i|$ and let $S_M$ be the sum of the largest $M$ numbers among $|X_i|$'s, for each
$1\leq M < N$. If $f(\cdot)$ is integrable, and if for every finite $a>0$, the integral $\int_{a}^{\infty}x^2f(x)dx$ is finite, then for every $\epsilon>0$ sufficiently small, as $N\rightarrow\infty$ and the ratio $M/N$ is kept constant, the following holds
\bea
\Prob\left(\left|\frac{S_M}{S_N} - (1-2\frac{\int_{0}^{\Psi_f(\frac{M}{2N})}x\cdot f(x) dx}{\stexp_{f(\cdot)}|X|})\right|>\epsilon \right)\rightarrow 0,
\eea
\noindent where $\Psi_f(x) = Q_f^{-1}(x)$ with $Q_f(x) =\int_{x}^{\infty}f(y)dy$.
\label{lemma:arbitrary_concentration}
\end{lem}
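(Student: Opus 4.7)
The plan is to follow the three-step outline that the paper sketches after Lemma \ref{lemma:Gaussian_Base}, with Gaussian-specific computations replaced by ones that use only integrability of $f$ together with finiteness of $\int_{a}^{\infty}x^{2}f(x)\,dx$ for every $a>0$. Set $a^{\ast}\triangleq \Psi_f(M/(2N))$; because the ratio $M/N$ is held fixed, $a^{\ast}$ is a positive constant independent of $N$, and the symmetry of $f$ gives $\Prob(|X_1|>a^{\ast})=2Q_f(a^{\ast})=M/N$. Introduce the truncated variables $\hat{X}_i \triangleq |X_i|\,\mathbf{1}[|X_i|>a^{\ast}]$. A quick splitting argument using the second-moment hypothesis shows that both $\stexp|X_1|$ and $\var(\hat{X}_1)$ are finite, and
\[
\stexp\hat{X}_1 \;=\; 2\int_{a^{\ast}}^{\infty} x\,f(x)\,dx \;=\; \stexp|X_1| - 2\int_{0}^{a^{\ast}} x\,f(x)\,dx,
\]
so the target limit can be rewritten as $\stexp\hat{X}_1/\stexp|X_1|$, which is the ratio I actually want to establish.

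The first step is to apply the weak law of large numbers, with quantitative rates from Chebyshev's inequality (finite variances are exactly what is needed), to obtain $S_N/N \to \stexp|X_1|$ and $\tfrac{1}{N}\sum_i \hat{X}_i \to \stexp\hat{X}_1$ in probability. These statements would immediately yield the lemma if $S_M$ coincided with $\sum_i \hat{X}_i$. The second and more delicate step is to show that their difference is $o(N)$ with high probability. The natural device is a sandwich: fix a small $\eta>0$ and observe that on the event $\{|X_{(M)}|\in[a^{\ast}-\eta,a^{\ast}+\eta]\}$,
\[
\sum_i |X_i|\,\mathbf{1}[|X_i|>a^{\ast}+\eta] \;\leq\; S_M \;\leq\; \sum_i |X_i|\,\mathbf{1}[|X_i|>a^{\ast}-\eta].
\]
Both bracketing sums concentrate around $\stexp\bigl[|X_1|\,\mathbf{1}[|X_1|>a^{\ast}\pm\eta]\bigr]$ by the same truncated LLN, and by continuity of $x\mapsto\int_{x}^{\infty}y f(y)\,dy$ these two expectations are within $O(\eta)$ of $\stexp\hat{X}_1$; sending $\eta\to 0$ after $N\to\infty$ closes the gap.

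The remaining task is to show that the sandwich event has probability tending to one, which is where the main technical obstacle sits. This boils down to Chebyshev bounds on the empirical counts $N_{\pm}\triangleq |\{i:|X_i|>a^{\ast}\pm\eta\}|$: their means satisfy $\stexp N_{+}=2NQ_f(a^{\ast}+\eta)<M<2NQ_f(a^{\ast}-\eta)=\stexp N_{-}$, with the gaps of order $\eta N$ by strict monotonicity of $Q_f$ near $a^{\ast}$, while $\var(N_{\pm})=O(N)$. Chebyshev therefore gives $N_{+}<M<N_{-}$ with probability $1-O(1/(\eta^{2}N))$, on which event $|X_{(M)}|\in(a^{\ast}-\eta,a^{\ast}+\eta)$, completing the sandwich. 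Combining the two steps yields $S_M/S_N \to \stexp\hat{X}_1/\stexp|X_1|$ in probability, which is the statement of the lemma after the algebraic rearrangement noted in the first paragraph. Apart from the control of the random order-statistic threshold, nothing beyond finite first and truncated second moments is used, so the proof goes through for every distribution satisfying the stated hypotheses.
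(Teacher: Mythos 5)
Your proof is correct. Note first that the paper never actually writes out a proof of this general lemma; it only proves the Gaussian special case (Lemma \ref{lemma:Gaussian_Base}) in Appendix \ref{app:proof_of_order_stat} and leaves the generalization implicit, so the natural comparison is with that argument. Your skeleton is the same --- truncate at the deterministic quantile $a^{\ast}=\Psi_f(M/(2N))$, apply a law of large numbers to $S_N/N$ and to the truncated empirical mean, and control the mismatch between the rank-$M$ cutoff and the quantile cutoff through concentration of the binomial exceedance count --- but you execute the two delicate steps differently, and in both cases your choice is better matched to the stated hypotheses. Where the paper compares $S_M$ to the random-count sum $S_{M'}=\sum_i \hat X_i$ and bounds $|S_M-S_{M'}|/S_N$ by roughly $|M-M'|/(N-M)$ via an order-statistics inequality, you bracket $S_M$ between two fixed-threshold truncated sums on the event that the $M$-th largest $|X_i|$ lies in $(a^{\ast}-\eta,a^{\ast}+\eta)$; this sandwich is cleaner and sidesteps the somewhat garbled chain of inequalities around (\ref{eq:SM,M'_aux}). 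More importantly, the paper's proof leans on Bernstein and Chernoff bounds, which are fine for Gaussian variables but do not apply to the unbounded truncated variable $|X_1|\mathbf{1}(|X_1|>a^{\ast})$ under only a finite-second-moment assumption; your substitution of Chebyshev is precisely what the hypothesis $\int_a^{\infty}x^2f(x)\,dx<\infty$ licenses, so your version is the one that actually establishes the lemma at its stated level of generality (with polynomial rather than exponential rates, which is all the lemma claims). The one caveat --- your count argument needs $Q_f$ to be strictly decreasing near $a^{\ast}$ so that the gaps $2N\bigl(Q_f(a^{\ast})-Q_f(a^{\ast}\pm\eta)\bigr)$ are of order $N$ --- is already implicit in the lemma's assumption that $\Psi_f=Q_f^{-1}$ exists, so it is not a gap in your proof.
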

Using the above lemma, we can modify the concentration term of equation (\ref{eq:kbar bound}) for the term $\frac{\|\x_{\overline{K_1}}\|_1}{\|\x\|_1}$, where the distribution of the nonzero entries of $\x$ is $f(\cdot)$. The resulting concentration thus becomes:
\beq
\Prob\left(\left|\frac{\|\x_{\overline{K_1}}\|_1}{\|\x\|_1} - 2\frac{\int_{0}^{\Psi_f(\frac{(1-\epsilon_1)}{2(1+\epsilon_0)})}x\cdot f(x) dx}{\stexp_{f(\cdot)}|X|}\right|>\epsilon\right)\rightarrow0,
\label{eq:kbar bound_gen}
\eeq
\noindent which, when put together with the bound of (\ref{eq:robustness1}) results in (Note that the bound in (\ref{eq:robustness1}) is independent from the distribution of $\x$):
\beq
\Prob\left(\frac{\|\x-\hat{\x}\|_1}{\|\x\|_1} - \zeta_f(\epsilon_0) < \epsilon\right)\rightarrow 1,
\label{eq:robustness2}
\eeq
\noindent for every $\epsilon>0$. Here $\zeta_f(\epsilon_0)$ is defined by:

\beq
\zeta_f(\epsilon_0) \triangleq \inf_{\epsilon_1>0}\frac{2C(\epsilon_1)(1+\kappa^*)}{C(\epsilon_1)-1}\times  2\frac{\int_{0}^{\Psi_f(\frac{(1-\epsilon_1)}{2(1+\epsilon_0)})}x\cdot f(x) dx}{\stexp_{f(\cdot)}|X|}.
\label{eq:zeta_f}
\eeq

\noindent Consequently, following similar arguments as in the proofs of Theorem \ref{thm:l_1 support recovery}, we can state the following theorem as a generalization of the approximate support recovery of $\ell_1$ minimization for arbitrary distributions, the proof of which is immediate.

\begin{thm}[Approximate Support Recovery/Generalization]
Let $\A$ be an i.i.d. Gaussian $m\times n$ measurement matrix with
$\frac{m}{n}=\delta$. Let $k=(1+\epsilon_0)\mu_{W}(\delta)n$ and $\x$
be an $n\times1$ $k$-sparse  signal whose nonzero entries are independently drawn from a distribution $f(\cdot)$ which satisfies the conditions of Lemma \ref{lemma:arbitrary_concentration}. Suppose that
$\hat{\x}$ is the approximation to $\x$ given by the $\ell_1$ minimization, i.e. $\hat{\x}=argmin_{\A\z=\A\x}\|\z\|_1$. Then, as
$n\rightarrow\infty$, for $\epsilon>0$ sufficiently small, we have
\beq
\small
\Prob\left(\frac{|supp(\x) \cap supp_k(\hat{\x})|}{k} -
2Q_f(\sqrt{-2\log(1-\zeta_f(\epsilon_0))})>-\epsilon\right)\rightarrow 1,
\label{eq:support recovery}
\eeq
\noindent where $\zeta_f(\cdot)$ is defined in (\ref{eq:zeta_f}).
\label{thm:l_1 support recovery_generalize}
\end{thm}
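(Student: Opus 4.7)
The plan is to reproduce the architecture of the proof of Theorem \ref{thm:l_1 support recovery} step for step, replacing each distribution-dependent ingredient by its $f$-general counterpart while leaving the rest intact. Three building blocks are needed: (i) the deterministic deviation inequality $|K\cap L|\geq k-W(\x,\|\x-\hat{\x}\|_1)$ from Lemma \ref{lem:deviation thm}, which is entirely distribution-free and carries over verbatim; (ii) the weak-robustness bound (\ref{eq:robustness1}), which relies only on the Gaussianity of the measurement matrix $\A$ and on null-space geometry, and is therefore also unaffected by the signal distribution; and (iii) a concentration statement for $\|\x_{\overline{K_1}}\|_1/\|\x\|_1$, which is the single place where the distribution of the nonzero entries enters. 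Here one invokes Lemma \ref{lemma:arbitrary_concentration} in place of Lemma \ref{lemma:Gaussian_Base}; chaining it with (ii) has already been carried out in the excerpt to produce (\ref{eq:robustness2}) with $\zeta_f(\epsilon_0)$ in place of $\zeta(\epsilon_0)$.

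With (\ref{eq:robustness2}) in hand, all that is required is to translate the $\ell_1$-error control into a support-overlap bound. Using (i), the monotonicity of $W(\x,\cdot)$, and (\ref{eq:robustness2}), one gets $|K\cap L|\geq k-W(\x,(\zeta_f(\epsilon_0)+\epsilon')\|\x\|_1)$ with probability tending to one for every $\epsilon'>0$. The remaining task is a distribution-general counterpart of Lemma \ref{lem:W(x,a)}, bounding $W(\x,\alpha\|\x\|_1)/k$ when the nonzero entries are i.i.d.\ $f$. I would obtain this by sorting the $k$ nonzero magnitudes and letting $M$ denote the smallest index such that the largest $M$ entries of $\x$ carry at least $(1-\alpha)\|\x\|_1$ of the total $\ell_1$ mass; by definition $W(\x,\alpha\|\x\|_1)=k-M$. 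Lemma \ref{lemma:arbitrary_concentration} identifies the deterministic limit of $M/k$ as the solution of an equation in $\alpha$ involving $Q_f$ and $\Psi_f$, which, when specialized to the Gaussian form driving the bound of Lemma \ref{lem:W(x,a)}, yields precisely the quantity $2Q_f(\sqrt{-2\log(1-\alpha)})$ appearing in the theorem statement.

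The main obstacle is the promotion of this pointwise concentration to a high-probability uniform statement, because the cutoff index $M$ is itself a random functional of $\x$. I would handle this by sandwiching the random $M$ between two nearby deterministic values $\lfloor(\beta-\eta)k\rfloor$ and $\lceil(\beta+\eta)k\rceil$ on a fine mesh in $\eta$, applying Lemma \ref{lemma:arbitrary_concentration} at each fixed ratio, and taking a union bound; the monotonicity of the partial-order-statistic sums in $M$ together with the monotonicity of $W(\x,\lambda)$ in $\lambda$ makes the sandwich consistent, while the integrability hypothesis $\int_a^\infty x^2 f(x)\,dx<\infty$ built into Lemma \ref{lemma:arbitrary_concentration} supplies the second-moment control needed to get quantitative concentration across the mesh. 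Once this generalized version of Lemma \ref{lem:W(x,a)} is in place, substituting $\alpha=\zeta_f(\epsilon_0)+\epsilon'$, dividing by $k$, and sending $\epsilon'\downarrow 0$ yields (\ref{eq:support recovery}) and closes the argument. The delicate piece is verifying that $\alpha\mapsto \beta(\alpha)$ produced by Lemma \ref{lemma:arbitrary_concentration} inherits enough continuity from $f$ so that the $\epsilon'\downarrow 0$ limit commutes with the probabilistic bound; under the hypotheses imposed on $f$ this is a mild regularity check rather than a substantive obstacle.
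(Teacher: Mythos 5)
Your architecture is exactly the paper's: the paper's own ``proof'' of Theorem \ref{thm:l_1 support recovery_generalize} is a one-line appeal to the same arguments as in the proof of Theorem \ref{thm:l_1 support recovery}, with Lemma \ref{lemma:Gaussian_Base} swapped for Lemma \ref{lemma:arbitrary_concentration} and $\zeta$ for $\zeta_f$. Your steps (i)--(iii), the use of Lemma \ref{lem:deviation thm}, the monotonicity of $W(\x,\cdot)$, the observation that the robustness bound (\ref{eq:robustness1}) is signal-distribution-free, and the final $\epsilon'\downarrow 0$ limit reproduce that argument faithfully; you in fact supply more detail than the paper does on the generalized analogue of Lemma \ref{lem:W(x,a)} and on the uniformity issue for the random cutoff $M$.

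One step does not go through as you state it, though the defect is shared with the theorem's own displayed formula rather than introduced by you. Running the Appendix~B argument with Lemma \ref{lemma:arbitrary_concentration} in place of Lemma \ref{lemma:Gaussian_Base}, the limiting value of $W(\x,\alpha\|\x\|_1)/k$ is the $\beta_f(\alpha)$ defined implicitly by
\[
\frac{2\int_{0}^{\Psi_f\left(\frac{1-\beta_f}{2}\right)} x\, f(x)\,dx}{\stexp_{f(\cdot)}|X|}=\alpha,
\]
so the overlap bound one actually obtains is $1-\beta_f(\alpha)=2Q_f(t_\alpha)$, where $t_\alpha$ solves $2\int_0^{t}x f(x)\,dx=\alpha\,\stexp_{f(\cdot)}|X|$. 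This coincides with the closed form $2Q_f(\sqrt{-2\log(1-\alpha)})$ only when $f$ is the standard Gaussian (where the integral equation reduces to $1-e^{-t^2/2}=\alpha$); for a general symmetric $f$ meeting the hypotheses of Lemma \ref{lemma:arbitrary_concentration} the two differ---e.g., for $f$ supported away from a neighborhood of the origin, $t_\alpha$ stays bounded below while $\sqrt{-2\log(1-\alpha)}\to 0$, and the stated formula would then assert an overlap fraction near $1$ that the argument cannot deliver. So your claim that the general equation ``yields precisely'' the displayed quantity is the gap: the proof you outline establishes the theorem with $\sqrt{-2\log(1-\zeta_f(\epsilon_0))}$ replaced by $t_{\zeta_f(\epsilon_0)}$, not the formula as printed. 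The paper elides this by declaring the proof immediate, and the downstream use of the theorem (that the overlap tends to $1$ as $\zeta_f(\epsilon_0)\to 0$) survives under either form, but a careful write-up should state the bound via the integral equation.
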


Note that $Q_f(\cdot)$ is always a decreasing function which is equal to zero at the origin for symmetric distributions. Therefore, the overlap fraction given by Theorem \ref{thm:l_1 support recovery_generalize} can be arbitrarily close to 1, provided that $\zeta_f(\epsilon_0)$ is sufficiently small. Therefore, the key in further conclusions on the above bound is to derive a bound on the term $\zeta_f(\epsilon_0)$, and show that it becomes arbitrarily small. For BPSK signals for instance, the term $\frac{\|\x_{\overline{K_1}}\|_1}{\|\x\|_1}$ is always equal to $\epsilon_0$, and therefore we cannot guarantee that $\zeta(\epsilon_0)$ vanishes asymptotically as $\epsilon_0\rightarrow 0$ based on (\ref{eq:zeta_f}). In fact we prove that $\lim_{\epsilon_0\rightarrow 0}\zeta(\epsilon_0) = 0$,  for distributions $f(\cdot)$ for which one of the finite order derivatives at the origin is nonzero, stated formally in the following lemma:
\begin{lem}
Let $f(\cdot)$ be a symmetric distribution which satisfies the conditions of Lemma \ref{lemma:arbitrary_concentration}. If for some integer $r\geq 0$, the $r$'th order derivative of $f(\cdot)$ at origin exists and does not vanish, i.e., $f^{(r)}(0)\neq 0$, then $\zeta_f(\epsilon_0)=\mathcal{O}(\epsilon_0^{1/(r+1)})$, as $\epsilon_0 \rightarrow 0$. Consequently, the support set approximation of $\ell_1$ minimization is asymptotically perfect with high probability as $\epsilon_0\rightarrow 0$.
\end{lem}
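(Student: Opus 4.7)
The plan is to generalize the Gaussian calculation that produced the estimate $\zeta(\epsilon_0) \leq 4\pi(1+\kappa^*)\epsilon_0 + \mathcal{O}(\epsilon_0^2)$, replacing the explicit Gaussian expressions by a Taylor expansion of $f$ near the origin. It suffices to prove the bound for the \emph{smallest} index $r$ with $f^{(r)}(0)\neq 0$ (larger indices yield weaker bounds that then follow trivially). Since $f$ is symmetric, all odd-order derivatives at $0$ vanish, so this $r$ is even and near zero
\[
f(x) = \frac{f^{(r)}(0)}{r!}\,x^r + o(x^r).
\]

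First, I would set $\epsilon_1 = \epsilon_0$ in the definition (\ref{eq:zeta_f}) of $\zeta_f(\epsilon_0)$ and apply the scaling law $C(\epsilon_1) \geq 1/\sqrt{1-\epsilon_1}$ from Theorem \ref{thm:scale}, together with the elementary bound $1-\sqrt{1-\epsilon_0} \geq \epsilon_0/2$, to control the first factor by $4(1+\kappa^*)/\epsilon_0$. What remains is to estimate, as $\epsilon_0 \to 0$, the quantities
\[
T(\epsilon_0) := \Psi_f\!\bigl(\tfrac{1-\epsilon_0}{2(1+\epsilon_0)}\bigr), \qquad I(\epsilon_0) := \int_0^{T(\epsilon_0)} x f(x)\,dx.
\]

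Second, since $f$ is symmetric, $Q_f(0)=1/2$ and $T(0)=0$; the defining identity becomes $\int_0^{T(\epsilon_0)} f(y)\,dy = \tfrac{\epsilon_0}{1+\epsilon_0}$. Substituting the Taylor expansion of $f$ and integrating to leading order gives
\[
\frac{f^{(r)}(0)}{(r+1)!}\,T(\epsilon_0)^{r+1} \sim \frac{\epsilon_0}{1+\epsilon_0},
\]
so $T(\epsilon_0) = \Theta(\epsilon_0^{1/(r+1)})$. Integrating $x f(x) \sim \tfrac{f^{(r)}(0)}{r!}\,x^{r+1}$ up to this cutoff then yields $I(\epsilon_0) = \Theta(T(\epsilon_0)^{r+2}) = \Theta(\epsilon_0^{(r+2)/(r+1)})$. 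Combining with the first-factor bound,
\[
\zeta_f(\epsilon_0) \;\leq\; \frac{4(1+\kappa^*)}{\epsilon_0}\cdot \frac{2\,I(\epsilon_0)}{\stexp_{f(\cdot)}|X|} \;=\; \mathcal{O}\bigl(\epsilon_0^{1/(r+1)}\bigr).
\]

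Finally, the second assertion follows directly from Theorem \ref{thm:l_1 support recovery_generalize}: since $\zeta_f(\epsilon_0) \to 0$ and $Q_f$ is continuous at $0$ with $Q_f(0)=1/2$ by symmetry, the lower bound $2Q_f(\sqrt{-2\log(1-\zeta_f(\epsilon_0))})$ on the overlap fraction approaches $1$. The main technical obstacle is careful handling of the $o(x^r)$ Taylor remainder when inverting $Q_f$, so as to obtain matching upper and lower asymptotics for $T(\epsilon_0)$ rather than only an upper bound; once this is in place, the remaining substitutions into (\ref{eq:zeta_f}) and Theorem \ref{thm:l_1 support recovery_generalize} are routine.
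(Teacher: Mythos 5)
Your proposal is correct and follows essentially the same route as the paper's proof: set $\epsilon_1=\epsilon_0$, bound the robustness prefactor by $\mathcal{O}(1/\epsilon_0)$ via Theorem \ref{thm:scale}, and use the local behavior $f(x)\sim \frac{f^{(r)}(0)}{r!}x^r$ (for the smallest such $r$) to show that $\Psi_f\bigl(\frac{1-\epsilon_0}{2(1+\epsilon_0)}\bigr)$ scales as $\epsilon_0^{1/(r+1)}$ and hence the integral as $\epsilon_0^{1+1/(r+1)}$. The only cosmetic difference is that you derive two-sided $\Theta$ asymptotics for $T(\epsilon_0)$ (and worry about the Taylor remainder when inverting $Q_f$), whereas the paper only needs the one-sided bounds $f(x)\geq cx^r$ and $f(x)=\mathcal{O}(x^r)$ near the origin, which already suffice for the claimed $\mathcal{O}(\epsilon_0^{1/(r+1)})$ upper bound.
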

\begin{proof}
For simplicity, we take $\epsilon_1$ in the definition of $\zeta_f(\epsilon_0)$ to be equal to $\epsilon_0$, which only provides an upper bound. Since $f^{(r)}(0) > 0$ and $f(\cdot)$ is continuous, we conclude that for some constant $c>0$, and sufficiently small $x$, $f(x)\geq c\times x^r$. Therefore,
\beq
1/2-Q_f(x) =\int_{0}^{x}f(t)dt \geq \frac{c}{r+1}x^{r+1},
\eeq
\noindent and thus,
\beq
x\geq \Psi_f(1/2-\frac{c}{r+1}x^{r+1}),
\label{eq:x<Psi()}
\eeq
\noindent for sufficiently small $x$. Note that we have used the fact that $\Psi_f(\cdot)$ is a decreasing function.    Equivalently, (\ref{eq:x<Psi()}) means that
\beq
\Psi_f(1/2-x) =  \mathcal{O}(x^{1/(r+1)})
\label{eq:Psi=O}
\eeq
\noindent as $x\rightarrow 0$. On the other hand, note that $\frac{1-\epsilon_0}{2(1+\epsilon_0)} \geq 1/2-\epsilon_0$, and thus:
\beq  \Psi_f(\frac{1-\epsilon_0}{2(1+\epsilon_0)}) \leq \Psi_f(1/2-\epsilon_0).
\label{Psi<Psi}
\eeq
\noindent It follows from the above, (\ref{eq:Psi=O}), and the fact that $f(x) = \mathcal{O}(x^r)$ as $x\rightarrow 0$ that

\beq \int_{0}^{\Psi_f(\frac{(1-\epsilon_0)}{2(1+\epsilon_0)})}x\cdot f(x) dx = \mathcal{O}(\epsilon_0^{1+1/(r+1)}),\eeq

\noindent as $\epsilon_0\rightarrow 0$. Furthermore, from Theorem \ref{thm:scale}, we know that $C(\epsilon_1)\geq 1/\sqrt{1-\epsilon_0}$ (note that $\epsilon_1 = \epsilon_0$), and therefore $\frac{2C(\epsilon_1)(1+\kappa^*)}{C(\epsilon_1)-1} = \mathcal{O}(1/\epsilon_0)$ as $\epsilon_0\rightarrow 0$. Also, $\stexp_{f(\cdot)}|X|>0$ is constant. Therefore, from these conclusions and the definition of $\zeta_f(\cdot)$,  it follows that $\zeta_f(\epsilon_0) = \mathcal{O}(\epsilon_0^{1/(r+1)})$, as $\epsilon_0\rightarrow 0$.

\end{proof}

As a numerical example, we compute a theoretical bound for the approximate support recovery of $\ell_1$ minimization and threshold improvement in the case of $\delta= 0.5555$.  It is  easy to verify numerically that the conditions of Theorem \ref{thm: final thm} hold. The value of $\kappa^*$ is no more than $\sqrt{3}$ in this case. A theoretical bound on the overlap fraction between the $k$-support set of $\hat{\x}$ and the support set of the $k$-sparse $\x$ for an arbitrary distribution is provided by Theorem \ref{thm:l_1 support recovery_generalize}, where $k=(1+\epsilon_0)\mu_W(\delta)n$. We have computed this bound for three different distributions: Gaussian, uniform (-1,1) and a two sided Rayleigh distribution. The value of $r$, namely the smallest nonzero derivative order is $0$ for Gaussian and uniform  distributions, and is $1$ for the Rayleigh distribution. The computed bounds are plotted in Figure \ref{fig:overlaps_upperbounds}.  Furthermore, using  a value of $\omega=10$, and based on the premise of Theorem \ref{thm: final thm} and the computed bounds, we can certify an improvement of $\epsilon_0=5\times10^{-4}$ in the weak recovery threshold in the case of Gaussian distribution. For the uniform and Rayleigh distributions, the theoretical predictions in the improvement of recovery thresholds are smaller than the case of Gaussian, but are still strictly positive.  These improvement guarantees are of course much smaller than the practical values we would observe in practice, as will be illustrated in the following section.

\begin{figure}[t]
\centering
\includegraphics[width= 0.5\textwidth]{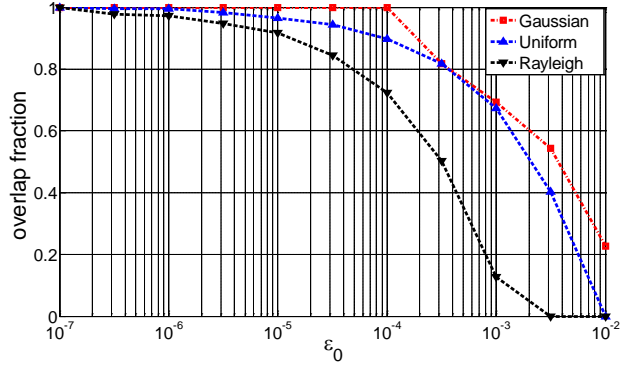}
  \caption{\scriptsize{Theoretical lower bound on the correct support estimation of $\ell_1$ minimization, as a function of the weak threshold exceeding fraction $\epsilon_0$. The plots are based on the theoretical results of Theorem \ref{thm:l_1 support recovery_generalize}, and are derived for Gaussian, uniform and two sided Rayleigh distributions.}}
  \label{fig:overlaps_upperbounds}
\end{figure}

\section{Simulations}
\label{sec:simulation}

We demonstrate the validity of the theoretical results of the previous sections, and the performance of Algorithm \ref{alg:modmain} by a few numerical simulations. The purpose of the simulations of this section is both to evaluate the performance of the proposed reweighted $\ell_1$ algorithm in practice, and to verify its distribution dependent behavior. Figure \ref{fig:simultions} shows the empirical performance of Algorithm \ref{alg:modmain} for sparse signals with various distributions. Here the signal
dimension is $n=200$, and the number of measurements is $m=112$, which
corresponds to a value of $\delta = 0.5555$. We generated random
sparse signals with i.i.d. entries coming from certain
distributions, namely Gaussian, uniform,  Rayleigh, square root of
$\chi$-square with 4 degrees of freedom and, square root of
$\chi$-square with 6 degrees of freedom. All of these distributions are continuous and have some finite-order non-vanishing derivative at the origin. In fact, in an increasing order of the mentioned distributions, the smallest order of nonzero derivative at the origin varies from 0 to 3. In other words, the pdf of a Gaussian and a uniform $(-1,1)$ distribution is nonzero at 0. The pdf of the Rayleigh distribution is zero at the origin, but has a nonzero derivative. Finally, the pdf's of square root of a $\chi$-square with 4 and 6 degrees of freedom have second and third nonzero derivatives at the origin, respectively. In Figure \ref{fig:simultions},  solid lines represent the simulation
results for ordinary $\ell_1$ minimization, and different colors
indicate different distributions. Dashed lines are used to show the
results for Algorithm \ref{alg:modmain}. Notice that the more derivatives that vanish at the
origin, the less significant improvement over $\ell_1$ minimization is observed, which is consistent with the analysis of Section \ref{sec:generalization}. The
Gaussian and uniform distributions are flat
and nonzero at the origin and show an impressive more than 20\%
improvement in the weak threshold (from 45 to 55 in this case).

\begin{figure}[t]
\centering
  \includegraphics[width= 0.5\textwidth]{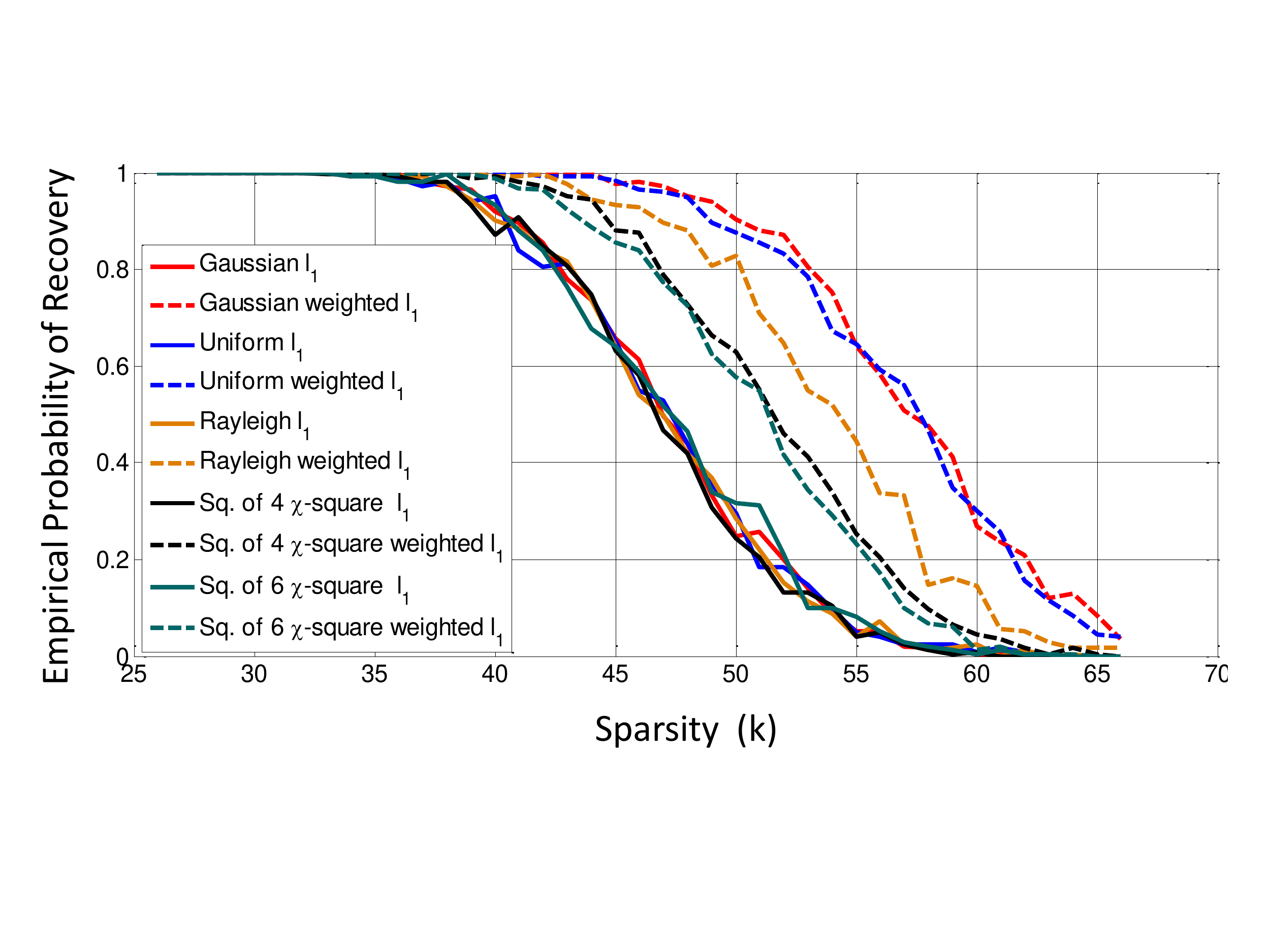}
  \caption{\scriptsize{Empirical Recovery Percentage for $n=200$ and $\delta = 0.5555$.}}
  \label{fig:simultions}
\end{figure}

In Figure \ref{fig:overlaps}, the overlap between the support set of a $k$-sparse signal $\x$ and the $k$-support set of the approximation $\hat{\x}$ given by $\ell_1$ minimization averaged over 400 random samples is plotted. Again, five different distributions were considered. It is apparent that overlap fraction is a decreasing function of $k$, and depends on the smoothness of the probability distribution at origin.
\begin{figure}[t]
\centering
  \includegraphics[width= 0.5\textwidth]{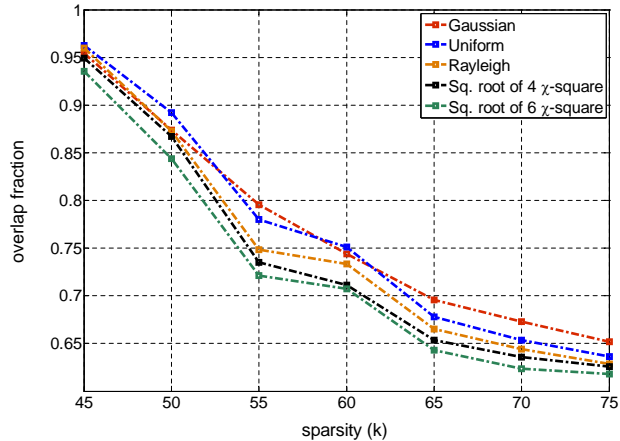}
  \caption{\scriptsize{Empirical overlap between the support set of a $k$-sparse vector and the $k$-support set of the $\ell_1$ optimum, for $n=200$ and $\delta = 0.5555$. Nonzero coefficients of signal are drawn from five different distributions (displayed). The average is over 400 samples.}}
  \label{fig:overlaps}
\end{figure}


%

We also report experimental results using regular $\ell_1$ and reweighted $\ell_1$ minimization  recovery algorithms over real world data. We have chosen a pair of satellite images (Figure \ref{fig:sat image})  taken in two different years, 1989 (left) and 2000 (right), from the New Britain rainforest in Papua New Guinea. Images originally belongs to Royal Society for the Protection of Birds and was taken from the Guardian archive, an article on deforestation. These images are generally recorded to evaluate environmental effects such as \emph{deforestation}.  The difference of images taken at different times is generally not very significant, and thus can be thought of as compressible. We have applied $\ell_1$ minimization  to recover the difference image over the subframe (subset of the original images) identified by the red rectangles in Figure \ref{fig:sat image}. In addition, we also implemented the reweighted $\ell_1$ minimization of Algorithm \ref{alg:modmain}, with $k=0.1n$ ($n$ being the total number of frame pixels), which assumed no prior knowledge about the structural sparsity of the signal or the nonzero coefficients. This value of $k$ was chosen heuristically, and is close to the actual support size of the signal. The original size of the image is $275\times 227$. We reduced the resolution by roughly a factor of $0.05$ for more tractability of $\ell_1$ solver in MATLAB. In addition, only the gray scale version of the difference image was taken into account, and was normalized so that the maximum intensity is 1. Furthermore, prior to compression, the difference image was further sparsified by rounding the intensities less than 0.1 to zero.  We pick the weight value $\omega = 2$ for the weighting stage of the reweighted $\ell_1$ algorithms. The normalized recovery error is defined to be the sum square of the intensity differences in the recovered and the original image, divided by the sum square of the original image intensity, i.e. $\sum_{i\in\text{frame}}(I_i-\hat{I}_i)^2 / \sum_{i\in\text{frame}}I_i^2$.  The average normalized error for $\ell_1$ minimization and reweighted $\ell_1 $ minimization is displayed in Figure \ref{fig:sat} as a function of $\delta$. The average is taken over $50$ realizations of i.i.d. Gaussian measurement matrices for each $\delta$. As can be seen, the recovery improvement is significant in the reweighted $\ell_1$ minimization.

\begin{figure}[t]
\centering
  \includegraphics[width= 0.33\textwidth]{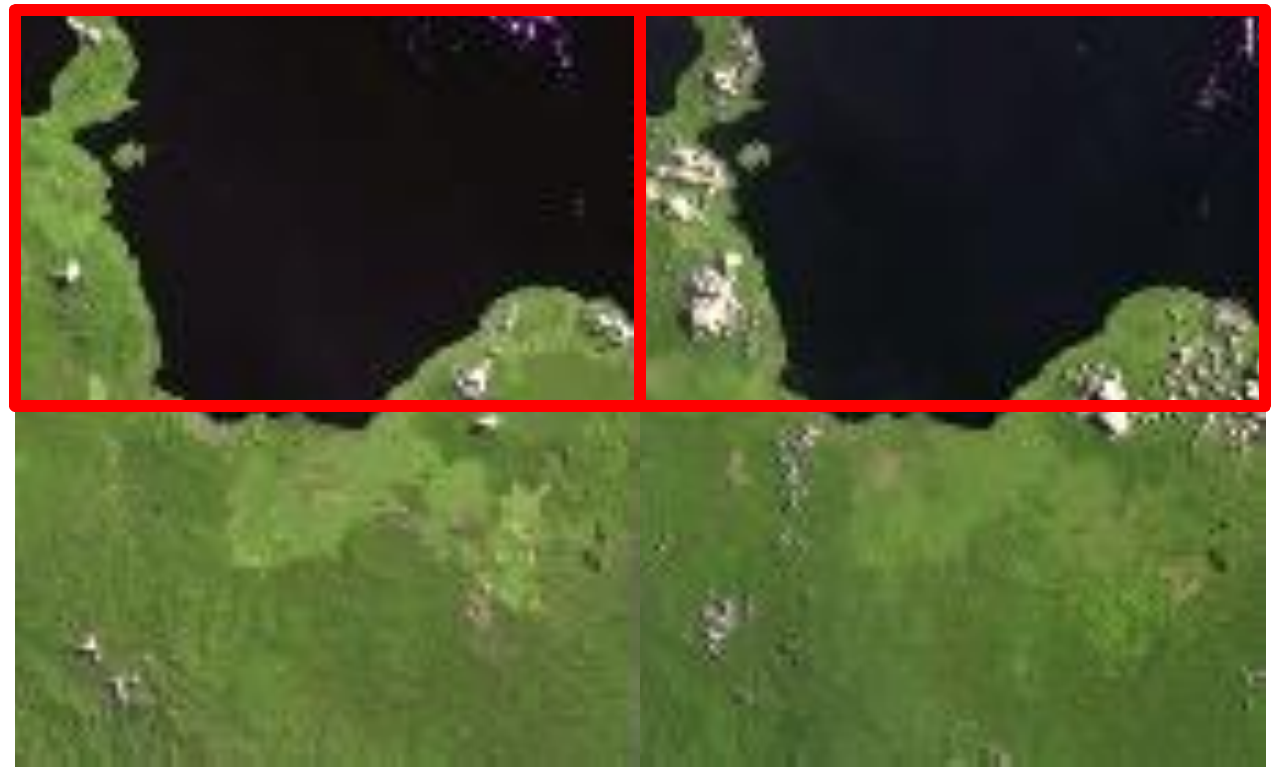}
  \caption{ \scriptsize{Satellite images taken from the New Britain rainforest in Papua Guina at 1989 (left) and 2000 (right). Red boxes identify the subframe used for the experiment, and green boxes identify the regions with higher associated weight in the weighted $\ell_1$ recovery. Image originally belongs to Royal Society for the Protection of Birds and was taken from the Guardian archive, an article on deforestation \url{http://www.guardian.co.uk/environment/2008/jan/09/endangeredspecies.endangeredhabitats}.}}
  \label{fig:sat image}
\end{figure}

\begin{figure}[t]
\centering
  \includegraphics[width= 0.33\textwidth]{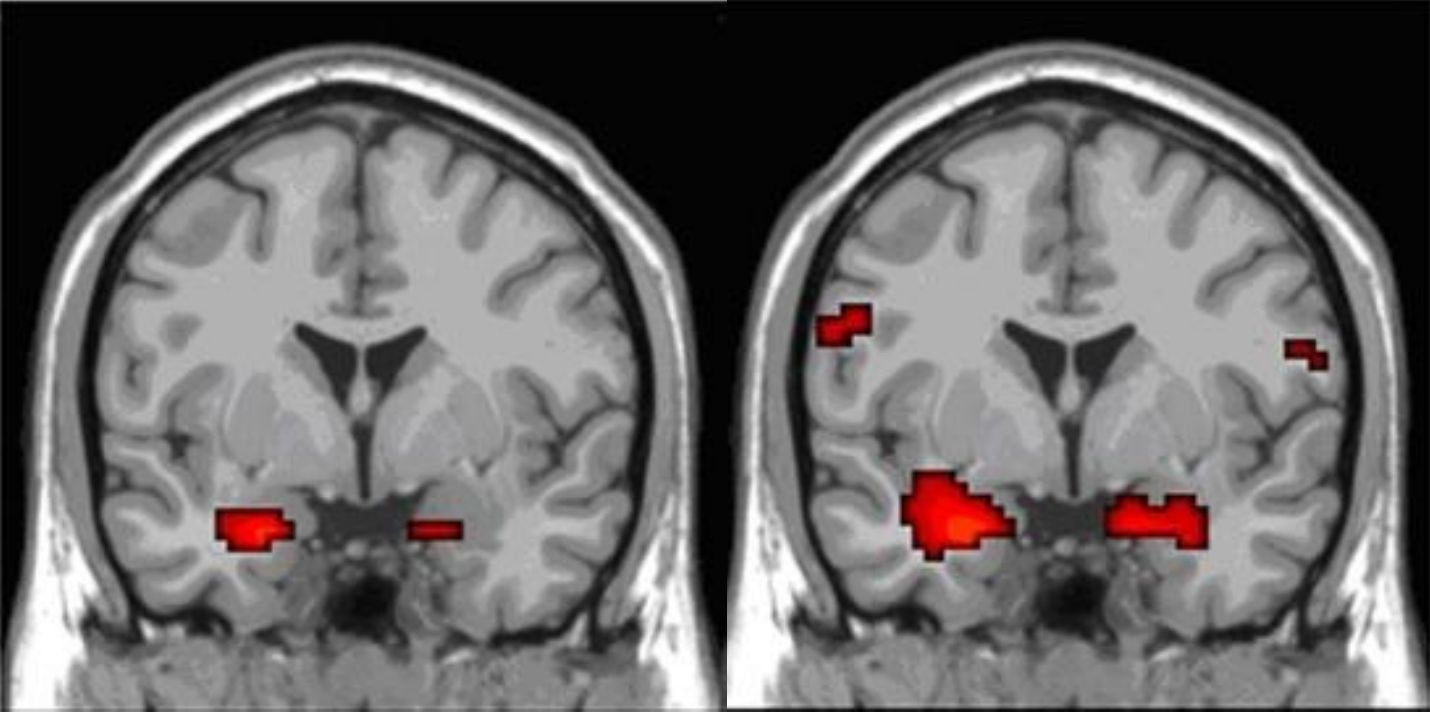}
  \caption{ \scriptsize{Functional MRI images of the brain at two different instances illustrating the brain activity. Green boxes identify the region with higher associated weight in the weighted $\ell_1$ recovery.
  Image is adopted from \url{https://sites.google.com/site/psychopharmacology2010/student-wiki-for-quiz-9}.}}
  \label{fig:brain image}
\end{figure}

\begin{figure}[t]
\centering
    \subfloat[]{\label{fig:sat}\includegraphics[width= 0.4\textwidth]{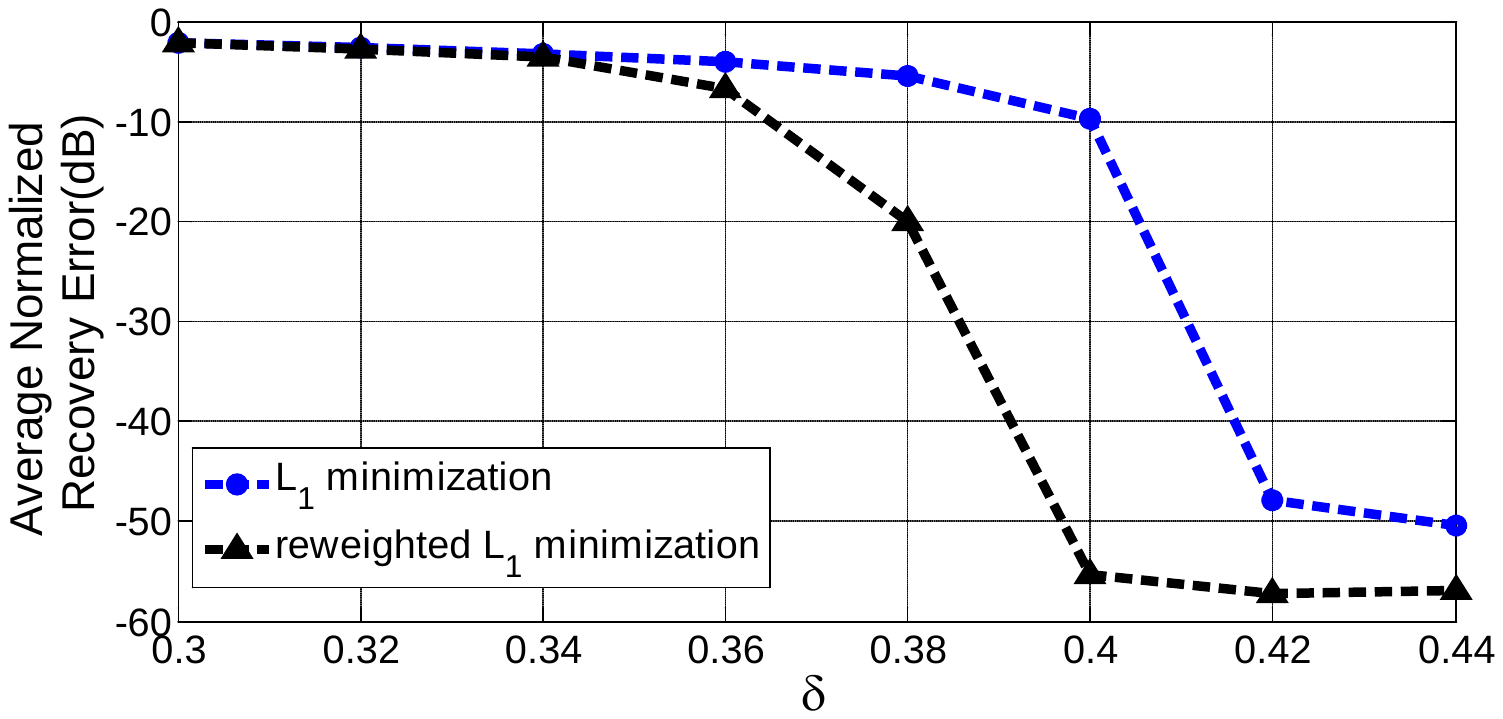}}
  \subfloat[]{\label{fig:brain}\includegraphics[width= 0.4\textwidth]{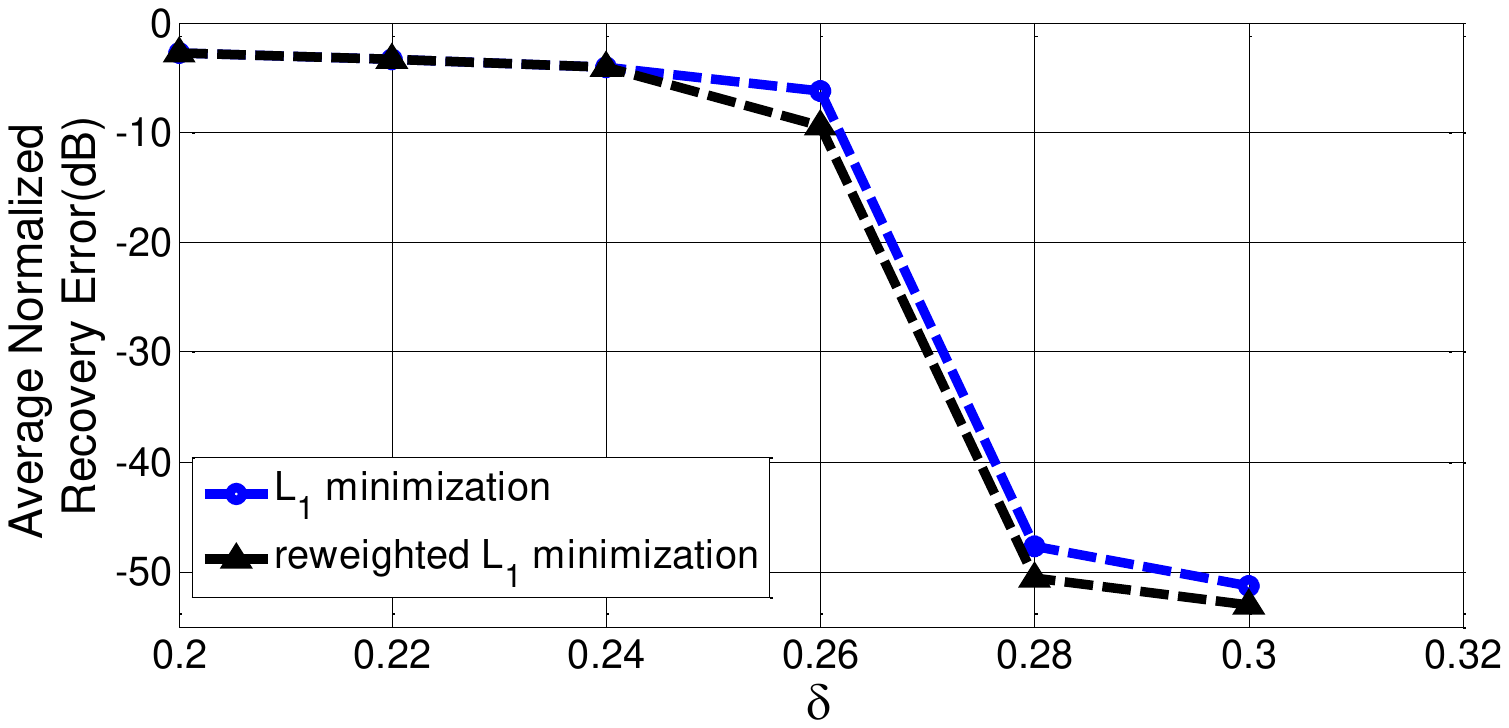}}
  \caption{ \scriptsize{Average normalized recovery error for $\ell_1$, and reweighted $\ell_1$ minimization recovery of the difference between the subframes of (a) a pair of satellite images shown in Figure \ref{fig:sat image}, and (b) the pair of brain fMRI images shown in Figure \ref{fig:brain image}.  Data is averaged over different realizations of measurement matrices for each $\delta$. }}
  \label{fig:simultions satellite_brain}
\end{figure}

Another experiment was done on a pair of brain fMRI images taken at two different instances of time, shown in Figure \ref{fig:brain image}. Similar to the satellite images, the objective is to recover the difference image from a set of compressed measurements The original image size is $271\times 271$, and similar preprocessing steps as for the satellite images were done before compression.  We used $\ell_1$ minimization and Algorithm \ref{alg:modmain} with no presumed prior information, with  $k=0.1n$ and $\omega = 1.3$. The average normalized recovery errors are displayed in Figure \ref{fig:brain}, from which we can infer similar conclusions as in the case of satellite images.

\section{Conclusion}
We introduced a new two-step reweighted $\ell_1$ minimization for the recovery of linearly compressed sparse signals. We proved that for sparse signals the nonzero entries of which are drawn from a broad class of continuous distributions, the proposed algorithm achieves a recovery threshold strictly better than that of $\ell_1$ minimization. Our theoretical analysis predicts that the performance improvement strongly depends on the distribution of the nonzero entries, and should be better for distributions with a smaller non-vanishing order of derivative at the origin. This was very closely verified by our numerical simulations. For distributions with no finite order non-vanishing derivative at origin, our analysis does not predict any improvement in the performance. This is also the case in practice: For ternary signals with nonzero values equal to $\pm1$ no improvement is observed in the empirical recovery threshold over the regular $\ell_1$ minimization. Our analysis was based on random Gaussian measurement matrices, and the robustness results of $\ell_1$ minimization. Possible related future research could address other measurement matrix ensembles, and the development of reweighted algorithms that can universally improve the recovery performance of linear programming. On the other hand, the improvement predictions using our theoretical tools are not tight, due to upper bounding techniques and worst case considerations in various parts of our proofs, specially in predicting the approximate support recovery potential of $\ell_1$ minimization. Future work can concentrate on tightening these bounds through more clever techniques, and consequently achieving more promising performance guarantees for reweighted linear programming.

\appendix{}

\section{Proof of Lemma \ref{lemma:Gaussian_Base}}
\label{app:proof_of_order_stat}
Let $a = \Psi(\frac{M}{2N})$.  We consider random variables $\hat{X}_i = |X_i|\cdot {\bf{1}}\left(|X_i|>a\right)$ for each $1\leq i\leq N$, where ${\bf{1}}\left(|X_i|>a\right)$ is equal to $1$ if $|X_i|>a$, and is $0$ otherwise. Also, let $\hat{S}=\hat{X}_1+\hat{X}_2+\dots+\hat{X}_N$. We first note that the empirical average of the $\hat{X_i}$'s converge to its expectation. More formally, an application of the Bernstein concentration inequality (see e.g., \cite{concentration}) implies that for every $\epsilon'>0$ and for some $c_1>0$, the following holds:
\beq
\Prob\left(|\hat{S}/N - \stexp(\hat{S}/N)| > \epsilon' \right) < \exp(-c_1 N\epsilon').
\eeq
\noindent On the other hand:
\beq
\stexp(\hat{S}/N) = \stexp{\hat{X}_1} = \Prob(|X_1|>a) = \sqrt{2/\pi}e^{-\frac{a^2}{2}}.
\label{eq:SM'_conc}
\eeq

\noindent Similarly, for the random variable $S_N = X_1 + X_2 + \dots + X_N$, we can write the following concentration inequality using Chernoff bound for some $c_2 >0$:
\beq
\Prob\left(|S_N/N - \stexp(S_N/N)| > \epsilon' \right) < \exp(-c_2 N\epsilon').
\label{eq:SN_conc}
\eeq
\noindent Since $\stexp (S_N/N) = \sqrt{2/\pi}$, this establishes (\ref{eq:Order_stat1}).

Let the random variable $M'$ be the number of nonzero $\hat{X}_i$'s. First of all, note that $\hat{S} = S_{M'}$. The rest of the proof includes the following steps. We prove that $S_{M'}/S_{N}$  is concentrated around $\stexp{S_{M'}}/\stexp{S_N}$ with high probability. Then we use the fact that $M'$ also converges to its expected values, $M$, to show that $S_{M}/S_{N}$ becomes arbitrarily close to $S_{M'}/S_{N}$. As a result, $S_M/S_N$ will be concentrated around $\stexp{S_{M'}}/\stexp{S_N}$ with high probability, which is the desired result.

Concentration of $S_{M'}/S_N/$ is shown by using equations  (\ref{eq:SM'_conc}) and (\ref{eq:SN_conc}) simultaneously. Combining the two inequalities, we conclude that
\beq
\Prob\left(\left|\frac{S_{M'}}{N} - \sqrt{2/\pi}e^{-\frac{a^2}{2}}\right| \leq \epsilon'~\text{and}~\left|\frac{S_{N}}{N} - \sqrt{\frac{2}{\pi}}\right| \leq \epsilon'\right) \geq 1-e^{-c_1 N\epsilon'}-e^{-c_2 N\epsilon'},
\eeq
\noindent and thus,
\beq
\Prob\left( \frac{\sqrt{2/\pi}e^{-\frac{a^2}{2}}-\epsilon'}{\sqrt{2/\pi}+\epsilon'}\leq \frac{S_{M'}}{S_N} \leq \frac{\sqrt{2/\pi}e^{-\frac{a^2}{2}}+\epsilon'}{\sqrt{2/\pi}-\epsilon'}\right) \geq 1-e^{-c_1 N\epsilon'}-e^{-c_2 N\epsilon'},
\eeq
\noindent and consequently:
\beq
\Prob\left(  \left|\frac{S_{M'}}{S_N}- e^{-\frac{a^2}{2}}\right| \leq \frac{2\sqrt{2/\pi}(e^{-\frac{a^2}{2}}+1)\epsilon'}{2/\pi - \epsilon'^2}\right) \geq 1-e^{-c_1 N\epsilon'}-e^{-c_2 N\epsilon'}.
\eeq
\noindent If $\epsilon'$ is sufficiently small, then $ \frac{2\sqrt{2/\pi}(e^{-\frac{a^2}{2}}+1)\epsilon'}{2/\pi - \epsilon'^2} \leq \alpha\epsilon'$, for some constant $\alpha > 0$. Taking $\epsilon'' = \alpha\epsilon'$, $\alpha_1 = c_1/\alpha$ and $\alpha_2 = c_2/\alpha$, we can say that for sufficiently small $\epsilon''$ the following holds:
\beq
\Prob\left(  \left|\frac{S_{M'}}{S_N}- e^{-\frac{a^2}{2}}\right| \leq \epsilon''\right) \geq 1-e^{-\alpha_1 N\epsilon''}-e^{-\alpha_2 N\epsilon''}.
\label{eq:aux_concentr}
\eeq
\noindent Now we show that the quantity $\frac{|S_M-S_{M'}|}{S_N}$ will be arbitrarily small for large $N$. To do so, assume without loss of generality that $|X_1|\geq |X_2|\geq \dots \geq |X_N|$, and that $M_1 = \min(M,M')$, and $M_2 = \max(M,M')$. We then have:
\beq
|S_M-S_{M'}| = |X_{M_1 + 1}| + |X_{M_1 + 2}| + \dots + |X_{M_2}|,
\eeq
\noindent and
\bea
|S_N| &=& |X_1| + |X_2| + \dots + |X_N| \nonumber\\
 &\geq & |X_1| +|X_2| + \cdots |X_{M_1}| \nonumber\\
 &\geq & (N-M_{1})|X_{M_1}| \nonumber\\
&\geq& \frac{N-M_{1}}{M_2 - M_1}|S_{M_2}-S_{M_1}| \label{eq:SM,M'_aux}  \\
&\geq& \frac{N-M}{|M-M'|}|S_{M}-S_{M'}|. \label{eq:SM,M'_main}
\eea
\noindent Note that equation (\ref{eq:SM,M'_aux}) holds because $|X_{M_1}|$ is larger than all the values $|X_{M_1+1}|,\dots,|X_{M_2}|$, and is therefore larger than $1/(M_2-M_1)$ times their sum. It directly follows from (\ref{eq:SM,M'_main}) that:
\beq
\frac{|S_M-S_{M'}|}{S_N}\leq \frac{|M'-M|}{N-M}.\label{eq:SM,M'_main_main}
\eeq
Therefore, to show the concentration of the left hand side in the above inequality, it suffices to show that $\frac{|M'-M|}{N-M}$ concentrates. Since the variables $X_i'={\bf{1}}\left(|X_i|>a\right)$ are independent Bernoulli random variables with probability $2Q(a) = \frac{M}{N}$ of being nonzero, a Chernoff concentration bound on their empirical average implies that
\beq
\Prob(\left|\frac{\sum_{i=1}^n X_i'}{N} - \stexp X' \right| \leq \epsilon''' ) \geq 1-e^{-c_3\epsilon'''N},
\eeq
\noindent for some $c_3>0$, and for every $\epsilon'''>0$, where $X'$ has the same distribution as all $X_i'$'s. Noting that $\sum_{i=1}^nX_i' = M'$ and $\stexp X' = M/N$, the above implies that:
\beq
\Prob(\frac{|M-M'|}{N}\leq  \epsilon''')  = \Prob(\frac{|M-M'|}{N-M}\leq  \frac{1}{1-M/N}\epsilon''') \geq 1-e^{-c_3\epsilon'''N}.
\eeq

\noindent If the ratio $M/N$ is kept constant, the quantity $\frac{\epsilon'''}{1-M/N}$ will be smaller than any $\tilde{\epsilon} >0$ as $\epsilon'''$ becomes arbitrarily small, which shows the concentration of $\frac{|M-M'|}{N-M}$. Using this and
the inequality of (\ref{eq:SM,M'_main_main}) we can conclude that $\frac{|S_M-S_{M'}|}{S_N}\leq \tilde{\epsilon}$ with probability  $1-e^{-\alpha_3\tilde{\epsilon}N}$ for some constant $\alpha_3>0$. Combining this latter conclusion with (\ref{eq:aux_concentr}), it follows that
\beq
\Prob\left(  \left|\frac{S_{M}}{S_N}- e^{-\frac{a^2}{2}}\right| \leq \epsilon''+\epsilon'''\right) \geq 1-e^{-\alpha_1 N\epsilon''}-e^{-\alpha_2 N\epsilon''}-e^{-\alpha_3 N\tilde{\epsilon}}.
\eeq
\noindent Consequently, we conclude that if $\epsilon$ is sufficiently small, the following holds:
\beq
\Prob\left(  \left|\frac{S_{M}}{S_N}- e^{-\frac{a^2}{2}}\right| \leq \epsilon\right) \geq 1-3e^{-c N\epsilon},
\eeq
\noindent for some $c>0$, which concludes the proof of (\ref{eq:Order_stat2}).

\section{Proof of Lemma \ref{lem:W(x,a)}}
\label{app:proof_of_W(x,ax)}
Let $\beta = 1-2Q(\sqrt{-2\log(1-\alpha)})$, and without loss of generality assume that the $k$ nonzero values of $\x$ are $x_1,x_2,\dots,x_k$, with $|x_1|\leq |x_2|\leq \dots \leq x_k$. In order to show that $W(\x,\alpha\|\x\|_1) < k(\beta+\epsilon)$, it suffices to show that $\sum_{i=1}^{k(\beta + \epsilon)}|x_i| >\alpha \|\x\|_1$. Applying the order statistic result of Lemma \ref{lemma:Gaussian_Base}, we have that with high probability:
\beq
\frac{\sum_{i=1}^{k(\beta + \epsilon)}|x_i|}{\sum_{i=1}^{k}|x_i|} \approx 1-\exp(-\frac{\Psi(\frac{1-\beta-\epsilon}{2})}{2}) > 1-\exp(-\frac{\Psi(\frac{1-\beta}{2})}{2}) = f,
\eeq
which concludes the proof.
\section{Computation of $\lambda_c$ Threshold}
\label{App:delta_c}

In \cite{journalweighted}, a ``sectional'' threshold $\delta_c^{(T)}(\gamma_1,\gamma_2,f_1,f2_,\omega)$ is defined, with the following implication. Let $L$ be an index set of size $\gamma_1 n$. If $\delta \geq \delta_c^{(T)}(\gamma_1,\gamma_2,f_1,f2_,\omega)$, then a sparse vector $\x$ with a random sign pattern with
exactly $\gamma_1f_1 n$ nonzero entries over $L$ and exactly $\gamma_2f_2 n$ entries over $\overline{L}$ can be recovered using the following weighted $\ell_1$ minimization:
\beq \min\|\z_L\|_1+\omega\|\z_{\overline{L}}\|_1~~\text{subject to}~~ \A\z = \A\x \eeq
\noindent The reason $\delta_c^{(T)}$ is called sectional is that it provides a recovery guarantee for all support set $\x$ satisfying the nonuniform sparsity pattern, but almost all support sets. From this definition, it immediately follows that the $\lambda_c$ of Theorem \ref{thm:delta} is given by:

\beq \lambda_c = \max_{f'_1\geq f_1, f'_1k+f'_2(n-k) = k}\delta_c^{(T)}(\frac{k}{n},1-\frac{k}{n},f'_1,f'_2,\omega).
\label{delta'}
\eeq

\noindent Furthermore, The explicit derivation of $\delta_c^{T}$ is given in \cite{journalweighted} which is as follows:

\begin{align}
\nonumber \delta_c^{(T)} = &\min\{\delta~|~\psi_{com}(\tau_1,\tau_2)-\psi_{int}(\tau_1,\tau_2)-\psi_{ext}(\tau_1,\tau_2)<0~ \\
\nonumber &\forall ~0\leq \tau_1\leq \gamma_1(1-f_1), 0\leq \tau_2\leq \gamma_2(1-f_2),\\
\nonumber &\tau_1+\tau_2 > \delta-\gamma_1f_1-\gamma_2f_2 \}
\label{deltaT}
\end{align}
\noindent where $\psi_{com}$, $\psi_{int}$ and $\psi_{ext}$ are obtained as follows. Define $g(x)=\frac{2}{\sqrt{\pi}}e^{-{x^2}}$, $G(x)=\frac{2}{\sqrt{\pi}}\int_{0}^{x}e^{-y^2}dy$ and let $\varphi(.)$ and $\Phi(.)$ be the standard Gaussian pdf and cdf functions respectively.
\begin{align}
\nonumber &\psi_{com}(\tau_1,\tau_2) = (\tau_1+\tau_2 + \gamma_1(1-f_1)H(\frac{\tau_1}{\gamma_1(1-f_1)})\\
&+\gamma_2(1-f_2)H(\frac{\tau_2}{\gamma_2(1-f_2)}) + \gamma_1H(f_1) + \gamma_2H(f_2)) \log{2}
\end{align}
\noindent where $H(\cdot)$ is the Shannon entropy function. Define $c=(\tau_1+\gamma_1f_1)+\omega ^2(\tau_2+\gamma_2f_2)$, $\alpha_1=\gamma_1(1-f_1)-\tau_1$ and $\alpha_2=\gamma_2(1-f_2)-\tau_2$. Let $x_0$ be the unique solution to $x$ of the equation
$2c-\frac{g(x)\alpha_1}{xG(x)}-\frac{\omega g(\omega x)\alpha_2}{xG(\omega x)}=0$. Then
\begin{equation}
\psi_{ext}(\tau_1,\tau_2) = cx_0^2-\alpha_1\log{G(x_0)}-\alpha_2\log{G(\omega x_0)}
\end{equation}

\noindent Let $b=\frac{\tau_1+\omega ^2\tau_2}{\tau_1+\tau_2}$, $\Omega'=\gamma_1f_1+\omega ^2\gamma_2f_2$ and $Q(s)=\frac{\tau_1\varphi(s)}{(\tau_1+\tau_2)\Phi(s)}+\frac{\omega \tau_2\varphi(\omega s)}{(\tau_1+\tau_2)\Phi(\omega s)}$. Define the function $\hat{M}(s)=-\frac{s}{Q(s)}$ and solve for $s$ in $\hat{M}(s)=\frac{\tau_1+\tau_2}{(\tau_1+\tau_2)b+\Omega'}$. Let the unique solution be $s^*$ and set $y=s^*(b-\frac{1}{\hat{M}(s^*)})$. Compute the rate function $\Lambda^*(y)= sy -\frac{\tau_1}{\tau_1+\tau_2}\Lambda_1(s)-\frac{\tau_2}{\tau_1+\tau_2}\Lambda_1(\omega s)$ at the point $s=s^*$, where $\Lambda_1(s) = \frac{s^2}{2} +\log(2\Phi(s))$.
The internal angle exponent is then given by:
\begin{equation}
\psi_{int}(\tau_1,\tau_2) = (\Lambda^*(y)+\frac{\tau_1+\tau_2}{2\Omega'}y^2+\log2)(\tau_1+\tau_2)
\end{equation}

When $f_1\rightarrow 1$ and $f_2\rightarrow 0$, the terms $\lambda_c(\gamma_1,\gamma_2,f_1,f2_,\omega)$ and $\delta_c^{(T)}(\gamma_1,\gamma_2,f_1,f2_,\omega)$ become arbitrarily close, and converge to $\delta_c(\gamma_1,\gamma_2,1,0,\omega)$, which is defined as the weak threshold of weighted $\ell_1$ minimization for the weighted $\ell_1$ minimization for the nonuniform sparsity model with set fractions $\gamma_1,\gamma_2$ and sparsity fractions $1$ and $0$(see \cite{journalweighted}).

\section{Proof of Theorem \ref{thm:scale}}
\label{app:proof_of_scaling}

The proof of this theorem is common to the most part with the technical details of \cite{isitrobust}, which are based on Grassman manifold techniques for the performance analysis of compressed sensing. The method is basically the extension of the high dimensional techniques of Donoho \emph{et al.} \cite{D,D1} for incorporating noise into the performance bounds of $\ell_1$ minimization. First consider the following lemma.

\begin{lem}
Let $\A$ be a general $m\times n$ measurement matrix, $\x$ be an $n$-element
vector and $\y=\A\x$. Denote $K$ as a subset of $\{1,2,\dots,n\}$ such
that its cardinality $|K|=k$ and further denote $\overline{K}=\{1,2,\dots,n\}\setminus K$. Let $\w$
denote an $n \times 1$ vector. Let $C>1$ be a fixed number.

Given a specific set $K$ and suppose that the part of $\x$ on $K$, namely $\x_{K}$ is fixed.
$\forall \x_{\overline{K}}$, any solution $\hat{\x}$ produced by the $\ell_1$ minimization
satisfies
\begin{equation*}
 \|\x_K\|_1-\|\hat{\x}_K\|_{1} \leq
\frac{2}{C-1} \|\x_{\overline{K}}\|_1
\end{equation*}
 and
\begin{equation*}
 \|(\x-\hat{\x})_{\overline{K}}\|_1
\leq \frac{2C}{C-1} \|\x_{\overline{K}}\|_1,
\end{equation*}
if and only if $\forall \w\in \mathbb{R}^n~\mbox{such that}~\A\w=0$,
we have
 \begin{equation}
\|\x_K+\w_{K}\|_1+ \|\frac{\w_{\overline{K}}}{C}\|_1 \geq \|\x_K\|_1.
 \label{eq:Grasswthmeq1}
 \end{equation}
\end{lem}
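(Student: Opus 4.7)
The plan is to prove the biconditional by a direct argument in each direction, both based on decomposing $\hat{\x} = \x + \w$ with $\w \in \text{null}(\A)$ and exploiting the optimality $\|\hat{\x}\|_1 \leq \|\x\|_1$ via triangle-inequality manipulations on the $K$ and $\overline{K}$ blocks. For the ``$\Leftarrow$'' direction (null-space condition implies robustness), I would split both sides of the optimality inequality along $K$ and $\overline{K}$ to obtain $\|\x_K\|_1 + \|\x_{\overline{K}}\|_1 \geq \|\x_K + \w_K\|_1 + \|\x_{\overline{K}} + \w_{\overline{K}}\|_1$. Applying the reverse triangle inequality $\|\x_{\overline{K}} + \w_{\overline{K}}\|_1 \geq \|\w_{\overline{K}}\|_1 - \|\x_{\overline{K}}\|_1$ on the $\overline{K}$ block and the null-space hypothesis $\|\x_K + \w_K\|_1 \geq \|\x_K\|_1 - \|\w_{\overline{K}}\|_1/C$ on the $K$ block, the inequality reduces to $2\|\x_{\overline{K}}\|_1 \geq (1 - 1/C)\|\w_{\overline{K}}\|_1$. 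Since $(\x - \hat{\x})_{\overline{K}} = -\w_{\overline{K}}$, this is exactly the second robustness bound. The first bound then follows by applying the null-space hypothesis again to obtain $\|\x_K\|_1 - \|\hat{\x}_K\|_1 \leq \|\w_{\overline{K}}\|_1/C$ and substituting the just-derived bound on $\|\w_{\overline{K}}\|_1$, which yields the factor $2/(C-1)$.

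For the ``$\Rightarrow$'' direction I would argue contrapositively. Assume there exists $\w^\ast \in \text{null}(\A)$ with strict slack $\gamma := \|\x_K\|_1 - \|\x_K + \w_K^\ast\|_1 - \|\w_{\overline{K}}^\ast\|_1/C > 0$. Construct $\x_{\overline{K}} = -\alpha \w_{\overline{K}}^\ast$ at the critical scaling $\alpha = (C-1)/(2C)$, so that $\|\x_{\overline{K}}\|_1 = \alpha\|\w_{\overline{K}}^\ast\|_1$ and a direct calculation shows the candidate point $\x + \w^\ast$ is feasible with $\|\x + \w^\ast\|_1 = \|\x\|_1 - \gamma$. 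Taking $\hat{\x} = \x + \w^\ast$ as an $\ell_1$-minimizer, one computes $\|\x_K\|_1 - \|\hat{\x}_K\|_1 = \|\w_{\overline{K}}^\ast\|_1/C + \gamma$, while the first robustness inequality would only permit $\frac{2}{C-1}\|\x_{\overline{K}}\|_1 = \|\w_{\overline{K}}^\ast\|_1/C$; thus the first robustness bound is violated by exactly $\gamma$.

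The main obstacle is that in this necessity argument $\x + \w^\ast$ need not actually be \emph{the} $\ell_1$-minimizer: a different $\hat{\x} = \x + \w'$ with $\|\hat{\x}\|_1 < \|\x + \w^\ast\|_1$ could exist, and one must then argue that any such $\w'$ either itself witnesses a violation of the robustness bounds, or can be re-scaled to do so. This density/limiting step is the delicate part and is handled using the Grassmann-angle-type analysis of \cite{isitrobust}; the sufficiency direction is by contrast a clean one-line triangle-inequality manipulation.
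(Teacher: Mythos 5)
First, a point of comparison: the paper itself gives no proof of this lemma. It is stated in Appendix D, attributed to \cite{isitrobust}, and the text immediately moves on to use only the sufficiency direction (the null-space condition implies the two robustness bounds), which is the only direction that matters for Theorem \ref{thm:scale}. So your proposal has to be judged on its own. Your sufficiency argument is correct and is the standard one: splitting $\|\hat{\x}\|_1\le\|\x\|_1$ over $K$ and $\overline{K}$, using the reverse triangle inequality on the $\overline{K}$ block and the hypothesis on the $K$ block gives $\|\w_{\overline{K}}\|_1\le\frac{2C}{C-1}\|\x_{\overline{K}}\|_1$, and feeding this back into the hypothesis yields the first bound; the algebra checks out.

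The necessity direction is where the genuine problem lies, and you have correctly located it but misdiagnosed the cure. The Grassmann-angle machinery of \cite{isitrobust} computes the probability, over random $\A$, that the null-space condition holds; it does nothing to force your witness $\x+\w^\ast$ to be an actual minimizer, and no argument can: read literally, with $\hat{\x}$ an exact $\ell_1$ minimizer, the ``only if'' direction is false. Take $n=2$, $m=1$, $\A=[1\;\;0.7]$, $K=\{1\}$, $\x_K=1$, $C=2$. The null space is spanned by $(-0.7,1)$, and at that vector one gets $|1-0.7|+|1|/2=0.8<1$, so condition (\ref{eq:Grasswthmeq1}) fails; yet for every $x_2$ the unique minimizer is $\hat{\x}=(1+0.7x_2,\,0)$, which satisfies $1-|1+0.7x_2|\le 0.7|x_2|\le \frac{2}{C-1}|x_2|$ and $|x_2-0|\le \frac{2C}{C-1}|x_2|$, i.e.\ both robustness bounds hold for every $\x_{\overline{K}}$. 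The equivalence is only true when $\hat{\x}$ is allowed to range over all feasible points with $\|\hat{\x}\|_1\le\|\x\|_1$, which is how \cite{isitrobust} phrases it and which costs nothing in the sufficiency direction (every minimizer is such a point, and that is all your ``$\Leftarrow$'' argument uses). Under that reading, your constructed $\hat{\x}=\x+\w^\ast$ with $\|\x+\w^\ast\|_1=\|\x\|_1-\gamma$ is already an admissible witness and your necessity argument closes with no further work; the ``delicate density/limiting step'' you defer to \cite{isitrobust} is not needed there, and cannot be supplied for the literal reading because the claim is then simply untrue.
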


In fact, if (\ref{eq:Grasswthmeq1}) is satisfied, we will have the stability result

\begin{equation*}
 \|(\x-\hat{\x})_{\overline{K}}\|_1
\leq \frac{2C}{C-1} \|\x_{\overline{K}}\|_1.
\end{equation*}

In \cite{isitrobust}, it was established that when the matrix $\A$ is sampled from an i.i.d. Gaussian ensemble, $C=1$, considering a single index set $K$, there exists a constant ratio $0<\mu_{W}<1$ such that if $\frac{|K|}{n} \leq \mu_{W}$, then with overwhelming probability as $n \rightarrow \infty$, the condition (\ref{eq:Grasswthmeq1}) holds for all $\w\in \mathbb{R}^n~\mbox{satisfying}~\A\w=0$. Now if we take a single index set $K$ with cardinality $\frac{|K|}{n}=(1-\epsilon_1){\mu_{W}}$, we would like to derive a characterization of $C$, as a function of $\frac{|K|}{n}=(1-\epsilon_1){\mu_{W}}$, such that the condition (\ref{eq:Grasswthmeq1}) holds for all $\w\in \mathbb{R}^n~\mbox{satisfying}~\A\w=0$.

When the measurement matrix $\A$ is sampled from an i.i.d. Gaussian ensemble, it is known that the probability that the condition (\ref{eq:Grasswthmeq1}) holds for all $\w \in \mathbb{R}^n~\mbox{satisfying}~\A\w=0$ is the \emph{Grassmann angle}, namely the probability that an $(n-m)$-dimensional uniformly distributed subspace intersects a polyhedral cone trivially (intersecting only at the apex of the cone). The complementary probability that the condition (\ref{eq:Grasswthmeq1}) does not hold for all $\w \in \mathbb{R}^n~\mbox{satisfying}~\A\w=0$ is the \emph{complementary Grassmann angle}. In our problem, without loss of generality, we scale $\x_{K}$ (extended to an $n$-dimensional vector supported on $K$) to a point in the relative interior of a $(k-1)$-dimensional face $F$ of the weighted $\ell_1$ ball,
\begin{equation}
\text{SP}=\{\y\in \mathbb{R}^n~|~\|\y_K\|_1+ \|\frac{\y_{\overline{K}}}{C}\|_1 \leq 1\}.
\end{equation}

The polyhedral cone we are interested in for the complementary Grassmann angle is the cone $\text{SP}-\x_{K}$,  namely the cone obtained by setting $\x_{K}$  as the apex, and observing $\text{SP}$ from this apex.

Building on the works by Santal\"{o} \cite{santalo} and
McMullen \cite{McMullen}  in high dimensional integral
geometry and convex polytopes, the complementary Grassmann angle for
the $(k-1)$-dimensional face $F$ can be explicitly expressed as the
sum of products of internal angles and external angles \cite{Grunbaumbook}:
\begin{equation}
P=2\times \sum_{s \geq 0}\sum_{G \in \Im_{m+1+2s}(\text{SP})}
{\beta(F,G)\gamma(G,\text{SP})}, \label{eq:Grassangformula}
\end{equation}
where $s$ is any nonnegative integer, $G$ is any
$(m+1+2s)$-dimensional face of the SP
($\Im_{m+1+2s}(\text{SP})$ is the set of all such faces),
$\beta(\cdot,\cdot)$ stands for the internal angle and
$\gamma(\cdot,\cdot)$ stands for the external angle.

The internal angles and external angles are basically defined as
follows \cite{Grunbaumbook}\cite{McMullen}:
\begin{itemize}
\item An internal angle $\beta(F_1, F_2)$ is the fraction of the
hypersphere $S$ covered by the cone obtained by observing the face
$F_2$ from the face $F_1$. \footnote{Note the dimension of the
hypersphere $S$ here matches the dimension of the corresponding cone
discussed. Also, the center of the hypersphere is the apex of the
corresponding cone. All these defaults also apply to the definition
of the external angles. } The internal angle $\beta(F_1, F_2)$ is
defined to be zero when $F_1 \nsubseteq F_2$ and is defined to be
one if $F_1=F_2$.
\item An external angle $\gamma(F_3, F_4)$ is the fraction of the
hypersphere $S$ covered by the cone of outward normals to the
hyperplanes supporting the face $F_4$ at the face $F_3$. The
external angle $\gamma(F_3, F_4)$ is defined to be zero when $F_3
\nsubseteq F_4$ and is defined to be one if $F_3=F_4$.

\end{itemize}

When $C=1$, we denote the probability $P$ in (\ref{eq:Grassangformula}) as $P_{1}$. By definition, the weak threshold $\mu_{W}$ is the supremum of $\frac{|K|}{n} \leq \mu_{W}$ such that the probability $P_{1}$ in (\ref{eq:Grassangformula}) goes to $0$ as $n \rightarrow \infty$. We need to show for $\frac{|K|}{n}=(1-\epsilon_1){\mu_{W}}$ and $C=\frac{1}{\sqrt{1-\epsilon_1}}$, (\ref{eq:Grassangformula}) also goes to $0$ as $n \rightarrow \infty$. To that end, we only need to show the probability $P'$ that, there exists an $\w$ from the null space of $A$ such that
 \begin{equation}
\|\x_K+\w_{K}\|_1+ \|\frac{\w_{\overline{K_1}}}{C_{\infty}}\|_1 +\|\frac{\w_{\overline{K_2}}}{C}\|_1 < \|\x_K\|_1
 \label{eq:Grasswthmeq2}
 \end{equation}
 goes to $0$ as $n \rightarrow \infty$, where $C_{\infty}$ is a large number which we may take as $\infty$ at the end, ${\overline{K_1}}$, ${\overline{K_2}}$ and $K$ are disjoint sets such that $|{\overline{K_1}}\bigcup {K}|=\mu_{W}n$ and ${\overline{K_1}}\bigcup {\overline{K_2}}=\overline{K}$.

Then the probability $P'$ will be equal to the probability that an $(n-m)$-dimensional uniformly distributed subspace intersects the polyhedral cone $\text{WSP}-\x_{K}$ nontrivially (intersecting at some other points besides the apex of the cone), where $\text{WSP}$ is the polytope

\begin{equation}
\text{WSP}=\{\y\in \mathbb{R}^n~|~\|\y_K\|_1+ \|\frac{\y_{\overline{K_1}}}{C_{\infty}}\|_1 +\|\frac{\y_{\overline{K_2}}}{C}\|_1 \leq 1\}.
\end{equation}

Then $P'$ is also a complementary Grassmann angle, which can be expressed by
\cite{Grunbaumbook}:
\begin{equation}
P'=2\times \sum_{s \geq 0}\sum_{G \in \Im_{m+1+2s}(\text{WSP})}
{\beta(F,G)\gamma(G,\text{WSP})}. \label{eq:Grassangformulanewp}
\end{equation}

Now we only need to show $P' \leq P_{1}$. If we denote $l=(m+1+2s)+1$ and $k=(1-\epsilon_1)\mu_{W}n$, in the polytope $\text{WSP}$, then there are in total $\binom{n-k}{l-k} 2^{l-k}$ faces $G$ of dimension
$(l-1)$ such that $F\subseteq G$ and $\beta(F, G) \neq 0$.

However, we argue that  when $C_{\infty}$ is very large, only $\binom{n-k_1}{l-k_1} 2^{l-k}$ such faces $G$ of dimension $(l-1)$ will contribute nonzero terms to $P'$ in (\ref{eq:Grassangformulanewp}), where $k_1=\mu_{W}n$. In fact, a certain $(l-1)$-dimensional face $G$ supported on the index set $L$ is the convex hull of $C_{i} e_{i}$, where $i \in L $, $C_{i}$ is the corresponding weighting for index $i$ (which is $1$ for the set $K$, $C_{\infty}$ for the set $\overline{K_1}$ and $C$ for the set $\overline{K_2}$ ), and $e_i$ is the standard unit coordinate vector. Now we show that if $\overline{K_1} \nsubseteq L$, the corresponding term in (\ref{eq:Grassangformulanewp}) for the face $G$ will be $0$ when $C_{\infty}$ is very large.

\begin{lem}
Suppose that $F$ is a $(k-1)$-dimensional face of \text{WSP} supported on the subset $K$ with $|K|=k$. Then the external angle
$\gamma(G, \text{WSP})$ between an $(l-1)$-dimensional face $G$ supported on the set $L$($F
\subseteq G$) and the  polytope $\text{WSP}$ is $0$ when  $\overline{K_1} \nsubseteq L$ and $C_{\infty}$ is large.
\label{lemma:external}
\end{lem}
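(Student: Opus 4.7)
The plan is to write down the outward normal cone $N_G$ explicitly, observe that one coordinate is forced into a very narrow slab once $\overline{K_1}\not\subseteq L$, and then conclude that the resulting spherical solid angle is vanishingly small as $C_\infty\to\infty$.

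First I would compute $N_G$. The gauge of $\text{WSP}$ is $\|y\|_{*}=\sum_i |y_i|/C_i$, whose dual norm is $\max_i C_i|v_i|$, so $v$ belongs to $N_G$ exactly when $\langle v,\cdot\rangle$ attains its maximum over $\text{WSP}$ at every point of $G$. Since the vertices of $G$ are $s_i C_i e_i$ for $i\in L$ (with $s$ the appropriate sign pattern), this forces
\[
N_G \;=\; \{\,v : v_i = s_i\lambda/C_i\ \text{for}\ i\in L,\ |v_j|\le \lambda/C_j\ \text{for}\ j\notin L,\ \lambda\ge 0\,\}.
\]
Because $F\subseteq G$ and $F$ is supported on $K$ with all weights equal to $1$, every vertex of $F$ must also be a vertex of $G$, and hence $K\subseteq L$.

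Next, invoking the hypothesis $\overline{K_1}\not\subseteq L$, I pick some $j_0\in\overline{K_1}\setminus L$, so that $C_{j_0}=C_\infty$ and the constraint $|v_{j_0}|\le\lambda/C_\infty$ is in force. For any nonzero $v\in N_G$,
\[
\|v\|_2^2 \;\ge\; \sum_{i\in L}\frac{\lambda^2}{C_i^2} \;\ge\; \sum_{i\in K}\lambda^2 \;=\; k\lambda^2,
\]
using $C_i=1$ on $K$ together with $K\subseteq L$. Hence $\lambda/\|v\|_2\le 1/\sqrt{k}$, and every direction in $N_G$, once normalized to the unit sphere $S_{V_G}$ of the orthogonal complement $V_G$ of the affine hull of $G$, satisfies $|v_{j_0}|\le 1/(\sqrt{k}\,C_\infty)$.

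Finally, since $j_0\notin L$ we have $e_{j_0}\in V_G$ (the tangent space of $G$ lies in $\mathrm{span}\{e_i:i\in L\}$, so its orthogonal complement contains every $e_j$ with $j\notin L$), so $v_{j_0}$ is a genuine linear coordinate on $V_G$. The sphere $S_{V_G}$ has dimension $n-l\ge 1$ (the case $l=n$ is excluded, since then $L=\{1,\dots,n\}\supseteq\overline{K_1}$), and the normalized spherical measure of the slab $\{|v_{j_0}|\le\epsilon\}$ on $S_{V_G}$ is $O(\epsilon)$ as $\epsilon\to 0$. Taking $\epsilon=1/(\sqrt{k}\,C_\infty)$ yields $\gamma(G,\text{WSP})=O(1/C_\infty)$, which is exactly what the lemma records as being $0$ when $C_\infty$ is large, allowing the corresponding face $G$ to be dropped from the Grassmann angle expansion for $P'$. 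The delicate point I will watch is the correct bookkeeping of $V_G$ and the verification that $e_{j_0}$ is a coordinate within it; the inclusion $K\subseteq L$ together with unit weights on $K$ is precisely what produces the uniform $\sqrt{k}$ factor that keeps the slab genuinely thin regardless of the other weights.
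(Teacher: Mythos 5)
Your proof is correct and follows essentially the same route as the paper's: both identify the outward normal cone of $G$, observe that the coordinate indexed by any $j_0\in\overline{K_1}\setminus L$ is forced to magnitude $O(\lambda/C_\infty)$ while the coordinates on $K\subseteq L$ keep $\lambda$ comparable to $\|v\|_2$, and conclude the solid angle vanishes as $C_\infty\to\infty$. You merely make explicit two points the paper leaves implicit --- that $e_{j_0}$ is a genuine coordinate on the span of the normal cone and that the thin slab on the sphere has measure $O(1/C_\infty)$ --- so no changes are needed.
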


\begin{proof}
Without loss of generality, assume $K=\{n-k+1, \cdots,n\}$. Consider
the $(l-1)$-dimensional face
\begin{equation*}
G=\text{conv}\{C_{n-l+1}\times e^{n-l+1}, ... ,C_{n-k}\times e^{n-k}, e^{n-k+1},
..., e^{n}\}
\end{equation*}
of $\text{WSP}$. The $2^{n-l}$ outward
normal vectors of the supporting hyperplanes of the facets
containing $G$ are given by
\begin{equation*}
\{\sum_{p=1}^{n-l} j_{p}e_p/{C_p}+\sum_{p=n-l+1}^{n-k} e_p/C_p+
\sum_{p=n-k+1}^{n} e_p, j_{p}\in\{-1,1\}\}.
\end{equation*}

Then the outward normal cone $c(G, \text{WSP})$ at the face $G$ is
the positive hull of these normal vectors. When $\overline{K_1} \nsubseteq L$, the fraction of the surface of the $(n-l-1)$-dimensional sphere taken by the cone $c(G, \text{WSP})$ is $0$ since the corresponding $C_{p}$ is very large.
\end{proof}

Now let us look at the internal angle $\beta(F,G)$ between the
$(k-1)$-dimensional face $F$ and an $(l-1)$-dimensional face $G$, where $\overline{K_1}$ is a subset of the support set of $G$.
Notice that the only interesting case is when $F \subseteq G$ since
$\beta(F,G)\neq 0$ only if $F \subseteq G$. We will see if $F
\subseteq G$, the cone $c(F,G)$ formed by observing $G$ from $F$ is the
direct sum of a $(k-1)$-dimensional linear subspace and the positive hull of $(l-k)$ vectors. These $(l-k)$ vectors are in the form
\begin{equation*}
v_{i}=(-\frac{1}{k},...,-\frac{1}{k},0,...,C_{i},0,...0), i \in L \setminus K.
\end{equation*}
 For those vectors $v_{i}$ with $i \in \overline{K_1}$, $C_{i}=C_{\infty}$. When $C_{\infty}$ is very large, the considered cone takes half of the space at each $i$-th coordinate with $i \in \overline{K_1}$.

So by the definition of the internal angle, the internal angle $\beta(F,G)$ is equal to $\frac{1}{2^{k_1-k}} \times \beta{(F,G_1)}$, where $G_1$ is supported only on the set $L\setminus \overline{K_1}$. It is known that this internal angle $\beta{(F,G_1)}$ is equal to the fraction of an $(l-k_1-1)$-dimensional sphere taken by a polyhedral cone formed by $(l-k_{1})$ unit vectors with inner product
$\frac{1}{1+C^2k}$ between each other. In this case, the internal
angle is given by
\begin{equation}
\beta(F,G)=\frac{1}{2^{k_1-k}} \frac{V_{l-k_1-1}(\frac{1}{1+C^2k},l-k_1-1)}{V_{l-k_1-1}(S^{l-k_1-1})},
\label{eq:internal}
\end{equation}
where $V_i(S^i)$ denotes the $i$-th dimensional surface measure on
the unit sphere $S^{i}$, while $V_{i}(\alpha', i)$ denotes the
surface measure for regular spherical simplex with $(i+1)$ vertices
on the unit sphere $S^{i}$ and with inner product as $\alpha'$
between these $(i+1)$ vertices. Thus (\ref{eq:internal}) is equal to
$B(\frac{1}{1+C^2k}, l-k_1)$, where
\begin{equation}
B(\alpha', m')=\theta^{\frac{m'-1}{2}} \sqrt{(m'-1)\alpha' +1}
\pi^{-m'/2} {\alpha'}^{-1/2}J(m',\theta),
\end{equation}
with $\theta=(1-\alpha')/\alpha'$ and
\begin{equation}
 J(m', \theta)=\frac{1}{\sqrt{\pi}} \int_{-\infty}^{\infty}(\int_{0}^{\infty} e^{-\theta v^2+2i v\lambda} \,dv )^{m'} e^{-\lambda^2} \,d\lambda.
\end{equation}

If we take $C=\frac{1}{\sqrt{1-\epsilon_1}}$, then
\begin{equation*}
\frac{1}{1+C^2k}=\frac{1}{1+k_1}.
\end{equation*}

By comparison, $\beta(F,G)=\frac{1}{2^{k_1-k}} \times \beta{(F,G)}$ is exactly the $\frac{1}{2^{k_1-k}} \beta(F_1, G_1)$ term appearing in the expression for the Grassmann angle $P$ between the face $F_1$ supported on the set $K_1$ and the polytope $\text{SP}$, where $G_1$ is an $(l-1)$-dimensional face of $\text{SP}$ supported on the set $L$.

Similar to the derivation for the internal angle, we can show that the external angle $\gamma(G, \text{WSP})$ is also exactly equal to $\gamma(G_1, \text{SP})$ term appearing in the expression for the Grassmann angle $P$ between the face $F_1$ supported on the set $K_1$ and the polytope $\text{SP}$, where $G_1$ an $(l-1)$-dimensional face of $\text{SP}$ supported on the set $L$.

Since there are in total only $\binom{n-k_1}{l-k_1} 2^{l-k}$ such faces $G$ of dimension $(l-1)$ will contribute nonzero terms to $P'$ in (\ref{eq:Grassangformulanewp}), substituting the results for the internal and external angles, we have $P=P'$. Thus for $\frac{|K|}{n}=(1-\epsilon_1) \mu_{W}$ and $C =\frac{1}{\sqrt{1-\epsilon_1}}$, with high probability, the condition the condition (\ref{eq:Grasswthmeq1}) holds for all $\w \in \mathbb{R}^n~\mbox{satisfying}~A\w=0$.

\end{document}